\DeclareMathAlphabet\mathbfcal{OMS}{cmsy}{b}{n}
\newtheorem{theorem}{Theorem}
\journal{arxiv.org}
\begin{document}

\begin{frontmatter}

\title{Classification of parameter spaces for reaction-diffusion systems on stationary domains}

\author[label1,label2]{Wakil Sarfaraz\corref{cor1}\fnref{label3}}
\address[label1]{University of Sussex, School of Mathematical and Physical Sciences, Department of Mathematics, Pevensey 3, Brighton, BN1 9QH, UK}

\cortext[cor1]{W.Sarfaraz and A.Madzvamuse are both corresponding authors}

\ead{wakilsarfaraz@gmail.com}

\author[label1]{Anotida Madzvamuse}
\ead{A.Madzvamuse@sussex.ac.uk}


\begin{abstract}
This paper explores the classification of parameter spaces for reaction-diffusion systems of two chemical species on stationary domains. The dynamics of the system are explored both in the absence and presence of diffusion. The parameter space is fully classified in terms of the types and stability of the uniform steady state. In the absence of diffusion the results on the classification of parameter space are supported by simulations of the corresponding vector-field and some trajectories around the uniform steady state. In the presence of diffusion, the main findings  are the quantitative analysis relating the domain-size with the reaction and diffusion rates and their corresponding influence on the dynamics of the reaction-diffusion system when perturbed in the neighbourhood of the uniform steady state.  Theoretical predictions are supported by numerical simulations both in the presence as well as in the absence of diffusion. Conditions on the domain size with respect to the diffusion and reaction rates are related to the types of diffusion-driven instabilities namely Turing, Hopf and Transcritical types of bifurcations. The first condition is an upper bound on the area of a rectangular domain in terms of the diffusion and reaction rates, which forbids the existence of Hopf and Transcritical types of bifurcations, yet allowing Turing type instability to occur. The second condition (necessary) is a lower bound on the domain size in terms of the reaction and diffusion rates to give rise to Hopf and Transcritical types of bifurcations as well as Turing instability. 
\end{abstract}
\begin{keyword}
Reaction-diffusion systems \sep Dynamical systems\sep Bifurcation analysis\sep Stability analysis\sep Turing diffusion-driven instability \sep Hopf Bifurcation \sep Transcritical bifurcation \sep Parameter spaces
\end{keyword}
\end{frontmatter}


\section{Introduction}
\label{sec1}
Reaction-diffusion systems (RDSs) attract a significant degree of attention from researchers in applied mathematics \cite{paper3, paper4, paper5, paper6, paper8, paper9, paper10, paper11, paper12}, mathematical and computational biology \cite{paper2, paper7, paper23, paper24, paper27}, chemical engineering \cite{paper30, paper31, paper32} and so forth. Alan Turing was one of the first scientists to realise the significance of RDSs as a self-governing dynamical system \cite{paper1}, and showed that RDSs can be responsible for the emergence of spatial patterns in nature. A large number of scientists \cite{paper3, paper4, paper7, paper11, paper15, paper20, paper22}, since the publication of \cite{paper1}, have contributed to investigating RDSs, with various types of reaction kinetics. The most popular models of reaction kinetics explored in the literature are the {\it activator-depleted} model (also known as the Schnakenberg reaction kinetics) \cite{paper20, book1, paper17, paper4}, Meinhardt \cite{paper23, paper24} and Thomas \cite{paper25} reaction kinetic models. From a research perspective the study of RDSs is conducted through different types of approaches, one of which focuses on the local behaviour of the dynamics of the RDS near a uniform steady state, which in turn relates to the subject of stability analysis \cite{paper11, paper9, paper13, paper18, paper20, paper23} of RDSs using the stability matrix. Linear stability analysis offers a great deal of insight regarding the behaviour of RDSs in the neighbourhood of a uniform steady state of a particular system. The other usual approach is numerical computation of the actual solution of RDSs using finite element, finite difference and other numerical methods \cite{paper3, paper4, paper6, paper8, paper10, paper12, book5, book4}. Numerical solution of RDSs helps to visualise the evolution of the dynamics in time. Numerical computations of RDSs with non-linear reaction kinetics with a particular choice of parameters partially encapsulates the possible types of dynamics that a certain RDS can exhibit. In order to better understand the behaviour of RDSs, it is necessary to have prior knowledge on the classification of the parameter values. Madzvamuse \textit{et al}., in \cite{paper20, paper18} found regions of parameter space, corresponding to diffusion-driven instability with and without cross-diffusion respectively using the well-known {\it activator-depleted} reaction kinetics. Their approach to finding unstable regions in the parameter space is restricted to Turing instability only. One of the few complementary contributions from the present work is the application of a numerical method (exclusive to this paper) in order to obtain the full classification of parameter space.  The numerical method is also employed to solve the equations for the implicit curves forming the partitioning of the classification within the parameter space. Liu \textit{et al}., in \cite{paper28} attempted to find constraints on the parameters of RDSs with {\it activator-depleted} reaction kinetics that causes the system to exhibit Hopf and transcritical bifurcations. The proofs in \cite{paper28} are focused on the existence of bifurcation points for some theoretical constraints of parametrised variables of the actual parameters, with no relation between the domain size and reaction-diffusion rates. Comparing their work to the present study, our results are robust in the sense that we explicitly relate domain size to the reaction-diffusion rates. Using this relationship, the parameter space is classified for different types of bifurcations. Moreover, in the present work the parameter constraints are a consequence of the relationship between the domain size and the reaction-diffusion rates. An additional drawback in the analysis of \cite{paper28} is that their results are produced on parametrised variables of the model and not on the actual parameters of the equations, which makes their results applicable to non-realistic possibilities (negative values) of the actual parameters of the model. This drawback is effectively resolved in the current work as the analysis is conducted on the actual two-dimensional positive real parameter space, which in addition to confirming the existence of different bifurcation regions, it offers concrete quantitative classification of the parameter space that guarantees the dynamics of RDSs to exhibit these bifurcations.

The majority of RDSs in the literature \cite{paper4, paper5, paper12, paper15, paper22, thesis1} that exhibit spatial or temporal pattern, contain nonlinear terms in their reaction kinetics, which makes the mathematical and numerical analysis of such systems extremely challenging. With no closed analytical solutions, studies of the local behaviour of the systems are generally conducted by use of linear stability theory close to bifurcation points. Here, the behaviour of a system can be theoretically predicted in the neighbourhood of the uniform steady states. Linear stability analysis can help derive certain conditions on the reaction kinetics, which lead to causing instability in the dynamics of RDSs in the presence of diffusion. Numerous papers \cite{paper9, paper11, paper17, paper18, paper20} exist in literature that routinely apply linear stability analysis to RDSs.  The majority of the published work \cite{paper9, paper17} focuses on deriving linear stability conditions in terms of the characteristics of the stability matrix for diffusion-driven instability, lacking to explore what the numerical application of these conditions induce on the admissible choice of the parameter space for a certain RDS. Spatio-temporal pattern formation occurs when an RDS undergoes diffusion-driven instability \cite{paper1, book1, paper26}. This occurs when a uniform steady state which is stable in the absence of diffusion becomes unstable upon adding diffusion to the system. Diffusion-driven instability crucially depends on the values of diffusion and reaction rates of each reactant, however, more importantly it depends on the parameter choice of the reaction kinetics. In the current work the existing knowledge on the conditions for diffusion-driven instability in the literature is extended using a series of analytical and numerical techniques, to obtain new insights on the combined effects of diffusion and reaction rates, and in turn relating these to domain size of the evolution of the pattern. The detailed and quantitative analysis on the relationship between the domain size and diffusion-reaction rates in light of the parameter classification is an aspect that has not yet to-date received sufficient attention in the literature. The usual approach in selecting parameters for numerical computations of RDSs \cite{paper26, paper27} is based on the behaviour of RDSs in the absence of diffusion by use of trial and error or is based on previously published work, to observe instability caused by diffusion \cite{paper3, paper4, paper9, paper12, paper15}. The absence of a robust method to fully classify the parameter space for an RDS, creates an arguable platform for the importance of this work. Efficient analytical as well as computational methods are used to demonstrate the quantitative relationship between the domain size and diffusion rate for an {\it activator-depleted} RDS. The main findings of the present work, which relate the domain size to the diffusion and reaction rates, are presented in the form of theorems with rigorous mathematical proofs and these theoretical results are supported computationally by finite element numerical solutions corresponding to the {\it activator-depleted} RDS on fixed rectangular domains. For each numerical demonstration the relative error plots of the solutions for each successive time-step are presented to visualise the convergence of the numerical approximate solutions.

This article is therefore structured as follows. In Section \ref{absence} we carry out detailed theoretical linear stability analysis of the system (\ref{B}) in the absence of diffusion. Linear stability analysis is conducted by computing the stability matrix through which, the non-dimensional parameter space is derived and classified. Section \ref{partitionone} presents the methodology to compute the solutions of the partitioning curves for the classification on the parameter space. A combination of analytical and numerical methods using polynomials are applied. In Section \ref{presence}, the linear stability analysis of the system is conducted in the presence of diffusion and the parameter space is explored to understand the consequences of including diffusion in the analysis. The parameter space is further explored to find the change in regions of the parameter space through varying the non-dimensional diffusion coefficient. Section \ref{main} contains the main findings of this work which are presented in the form of two theorems. Each theorem is supported by finite element solutions of the model system. Section \ref{conclusion} presents conclusions, future directions and possible extensions of the current work. 

\section{Model equations}\label{absence}
The dynamics of two chemical species $u$ and $v$ in a coupled system of nonlinear parabolic equations is considered. Both species diffuse with independent rates and satisfy the well-known {\it activator-depleted} reaction-diffusion system in a closed two-dimensional rectangular domain denoted by $\Omega \subset \mathbb{R}^2$ with area $L_x\times L_y$, where $L_x$ and $L_y$ are the corresponding side lengths in the direction of $x$ and $y$ axes respectively. The reaction-diffusion system satisfying the {\it activator-depleted} model for $u$ and $v$ in its non-dimensional form \cite{paper20, book1, paper17, paper4} reads as
\begin{equation}
\begin{split}
\begin{cases}
   \frac{\partial u}{\partial t} =& \triangle u + \gamma (\alpha -u+u^2v), \\
   \frac{\partial v}{\partial t} =& d \triangle v + \gamma( \beta - u^2v),
\end{cases}
   \label{B}
      \end{split}
\end{equation}
where $d$, $\gamma$, $\alpha$ and $\beta$ are non-dimensional positive constants. In (\ref{B}) the non-dimensional parameter $d$ denotes the quantity $\frac{D_v}{D_u}$, where $D_u$ is the diffusion rate of the variable $u$ and $D_v$ is the diffusion rate of the variable $v$. The non-dimensional parameter $\gamma$ denotes the reaction rate, which is also known as the scaling parameter for the reaction kinetics. The boundary $\partial \Omega$ is subject to zero flux condition, which means on $\partial \Omega$ the chemical species $u$ and $v$ satisfy homogeneous Neumann boundary conditions in the form
\begin{equation}\label{zerofluxBCs}
\begin{split}
     \frac{\partial u}{\partial \bm{n}} = \frac{\partial v}{\partial \bm{n}}=0, \qquad \text{on} \quad (x,y) \in \partial \Omega, \quad t\geq 0,
     \end{split}
 \end{equation}
where $\bm{n}$ denotes the outward normal through $\partial \Omega$. For initial conditions the existence of some strictly positive quantity from each of these chemical concentrations in the domain is assumed, which is written as
 \begin{equation}\label{ICs}
 \begin{split}
     u(x,y,0) = u_0(x,y), \qquad v(x,y,0) = v_0(x,y),\qquad(x,y) \in \Omega, \quad t=0.
     \end{split}
 \end{equation} 
\subsection{Remark}
\noindent Despite the fact that all the results obtained  in this paper hold for general rectangular geometries of $\Omega$ i.e. with possibilities $L_x \neq L_y$, however for simplicity in the analysis $\Omega$ is considered as a square, which means $L_x = L_y$. The results can be readily extended to a rectangular case by taking the area of $\Omega$ as $L^2$, where $L = \max \{L_x,L_y\}$.

\subsection{Stability analysis in the absence of diffusion}\label{absenceone}
In the absence of diffusion the system (\ref{B}) takes the form of a set of ordinary differential equations of the form
\begin{equation}
\begin{split}
\begin{cases}
   \frac{d u}{dt} =& \gamma(\alpha -u+u^2v) = \gamma f(u,v),\\
\frac{d v}{dt} =& \gamma(\beta - u^2v) = \gamma g(u,v).
\end{cases}
\end{split}
\label{C}
\end{equation}
To analyse the stability of system (\ref{C}), it is necessary to compute its uniform steady state solution. Let $(u_s,v_s)$ denote the uniform steady state solution of the system (\ref{C}), then $(u_s,v_s)$ must simultaneously satisfy the system of nonlinear algebraic equations in the form
 \begin{equation}
 \begin{split}
    f(u_s,v_s)=&\alpha - u_s +u_s^2v_s =0, \qquad  g(u_s,v_s)=\beta - u_s^2v_s = 0.
    \end{split}
    \label{C1}
\end{equation}
The nonlinear algebraic system (\ref{C1}) admits a unique solution in the form 
\[
(u_s, v_s) = \Big(\alpha+\beta, \frac{\beta}{(\alpha+\beta)^2}\Big),
\]
which enforces a restriction on the parameters of the system such that $\alpha+\beta \neq 0$. Since both of these parameters resemble physical quantities, therefore, strictly positive, an appropriate interpretation of this restriction is that they both simultaneously cannot become zero.

\subsection{Stability matrix}
The stability of system (\ref{C}) is analysed by computing the Jacobian matrix \cite{book1} of (\ref{C1}) and conducting the stability analysis using the uniform steady state solution $(u_s, v_s)$, hence 
\[ \bm{J}|_{(u_s,v_s)}=\gamma \left[ \begin{array}{cc}
\frac{\partial f}{\partial u}  & \frac{\partial f}{\partial v} \\ 
\frac{\partial g}{\partial u}& \frac{\partial g}{\partial v} 
\end{array} \right]_{(u_s,v_s)}=\gamma \left[ \begin{array}{cc}
2uv-1  & u^2 \\ 
-2uv   & -u^2 
\end{array} \right]_{(u_s,v_s)}. 
 \]
Substituting the expressions in terms of $\alpha$ and $\beta$ for $(u_s,v_s)$ in $\bm{J}$ the matrix becomes 
\[
\bm{J}|_{(u_s,v_s)}=\gamma \left[ \begin{array}{cc}
\frac{\beta -\alpha}{\alpha +\beta}  & (\beta+\alpha)^2 \\ 
-\frac{2\beta}{\beta+\alpha}& -(\beta+\alpha)^2 
\end{array} \right],
\]
which is called the stability matrix \cite{book1, book7} for system (\ref{C1}).

\subsection{Parameter analysis}
Let 
\[
T(\alpha,\beta) =\gamma  \Big(\frac{\beta-\alpha-(\beta+\alpha)^3}{\beta+\alpha} \Big)\quad \text{and}\quad D(\alpha,\beta) =\gamma^2 (\alpha+\beta)^2,
\]
denote the trace and determinant of $\bm{J}$ respectively, then the characteristic polynomial for the eigenvalues $\lambda_{1,2}$ of $\bm{J}$ in terms of  $T(\alpha,\beta)$ and $D(\alpha,\beta)$ takes the form
\[
 \lambda^2 - T(\alpha,\beta)\lambda + D(\alpha,\beta) = 0.
\]
Hence, the two roots of this characteristic polynomial in terms of $T(\alpha,\beta)$ and $D(\alpha,\beta)$ are given by \begin{equation}
\lambda_{1,2} = \frac{1}{2} T(\alpha,\beta) \pm \frac{1}{2}\sqrt{T^2(\alpha,\beta)-4D(\alpha,\beta)}.
\label{character}
\end{equation}
Expression (\ref{character}) for the eigenvalues is studied through investigating the domain of $T$ and $D$, which is the positive real cartesian plane $(\alpha,\beta)\in\mathbb{R}_+^2$. The classification that would cause the uniform steady state $(u_s,v_s)$ to change stability and type  due to the selection of the choice of the parameter values $(\alpha,\beta)$ is explored by examining the sign of the real part \cite{book1, paper9, paper17, paper18} of $\lambda_{1,2}$.
For example the parameter space that makes the uniform steady state $(u_s,v_s)$ stable, is the simultaneous combined choice of $\alpha,\beta \in \mathbb{R}_+ $ that ensures the real part of the eigenvalues to be negative, which in turn is related to the discriminant of the roots expressed by (\ref{character}). The full parameter space is investigated, so that all the possible types of influences due to the choice of parameters $\alpha$ and $\beta$ on the stability and types of the uniform steady state $(u_s,v_s)$ are encapsulated. The classification is conducted based on $\lambda_{1,2}$ to be a complex conjugate in the first case, then in the second case the parameter space is analysed when $\lambda_{1,2}$ are real roots. In each case, the space is further classified into stable and unstable regions. In addition to this the partitioning curves, on which the steady state $(u_s,v_s)$ changes its type, are studied using a numerical technique that is employed exclusively and for the first time in the context of the present topic. 

\subsection{Analysis for the case of complex eigenvalues}
It is clear that the eigenvalues are a complex conjugate pair if and only if the discriminant is negative, which means the parameters $(\alpha,\beta)$ must satisfy the inequality
\begin{equation}\begin{split}
T^2(\alpha,\beta)-4D(\alpha,\beta)=\gamma^2 \Big(\frac{\beta-\alpha-(\beta+\alpha)^3}{\beta+\alpha}\Big)^2-4\gamma^2(\beta+\alpha)^2 <0.
\label{fiq}
\end{split}
\end{equation}
In order to find what region is satisfied by (\ref{fiq}), we must find the critical curves on which the expression on the left of (\ref{fiq}) is equal to zero, which means that the discriminant changes sign by moving across these curves in the plane $(\alpha,\beta) \in \mathbb{R}_+$. These curves can be found by solving the equation $T^2(\alpha,\beta)-4D(\alpha,\beta) = 0$, which is true for the choice of $(\alpha, \beta) \in \mathbb{R}^2_+$, satisfying 
\begin{equation}
\gamma^2 \Big(\frac{\beta-\alpha-(\beta+\alpha)^3}{\beta+\alpha}\Big)^2-4\gamma^2(\beta+\alpha)^2 =0.
\label{Ineq}
\end{equation}
Solving (\ref{Ineq}) implies finding the implicit curves that represent the critical region on the $(\alpha,\beta)\in \mathbb{R}_+$ plane, for which the discriminant in the expression for $\lambda_{1,2}$ is zero.  The solution to (\ref{Ineq}) provides the boundaries, for the region of the plane that results in $\lambda_{1,2}$ to be a complex conjugate pair. The choice of $(\alpha,\beta)$ on the curves satisfying (\ref{Ineq}) enforces the eigenvalues of the system to be repeated real values, therefore, on these curves the steady state $(u_s,v_s)$ becomes a star node, whose stability will be analysed in the Section \ref{absencereal} with real eigenvalues. The left hand-side of (\ref{Ineq}) can be factorised in the form $\phi(\alpha,\beta)\psi(\alpha,\beta)$, which provides the equations of the two implicit curves $\phi(\alpha,\beta)=0$ and $\psi(\alpha,\beta)=0$ that determine the boundaries of the region corresponding to complex eigenvalues $\lambda_{1,2}$, where $\phi$ and $\psi$ are respectively given by
\begin{equation}
\begin{split}
\begin{cases}
\phi(\alpha,\beta) = \beta-\alpha -(\beta+\alpha)^3-2(\beta+\alpha)^2,\\
\psi(\alpha,\beta) = \beta-\alpha -(\beta+\alpha)^3+2(\beta+\alpha)^2.
\end{cases}
\label{imp}
\end{split}
\end{equation}

\subsubsection{Solving the partitioning curves}\label{partitionone}
A mesh is constructed on a square domain $D=[0,\alpha_{max}]\times[0,\beta_{max}]$, which is discretised by $N$ points in both directions of $\alpha$ and $\beta$, where $N$ is a positive integer. This constructs a square mesh of $(N-1)\times(N-1)$ cells, each of size $\frac{\alpha_{max}}{N} \times \frac{\beta_{max}}{N}$, with $N^2$ points in $D$. 
To find the implicit solution for (\ref{imp}), at every mesh point in the direction $\alpha $, the roots of the cubic polynomial in $\beta$ namely $\phi(\alpha_i,\beta)=0$, denoted by $\phi_i(\beta)=0$ are computed using the Matlab command `$roots$'. It is worth noting that for every fixed $\alpha_i$, one obtains $\phi_i(\beta)$, to be a cubic polynomial of degree 3 in $\beta$ of the form
\[
\phi_i(\beta) = C^i_0 + C^i_1 \beta + C^i_2 \beta^2 + C^i_3 \beta^3,
\]
where the coefficients are given by 
$C^i_0 = -\alpha_i-\alpha_i^3 -2\alpha_i^2$, $C^i_1 = 1-4\alpha_i -3\alpha_i^2$, $ C^i_2 = -3\alpha_i-2$, and $ C^i_3 = -1.$
Similarly for each $i$ in the direction of $\alpha$, there are $N$ cubic polynomials in $\beta$ satisfying $\psi_i(\beta)=0$, with $\psi_i$ of the form
\[
\psi_i(\beta) = C^i_0 + C^i_1 \beta + C^i_2 \beta^2 + C^i_3 \beta^3,
\]
where $C^i_0 = -\alpha_i-\alpha_i^3 +2\alpha_i^2$, $C^i_1 = 1+4\alpha_i -3\alpha_i^2$, $C^i_2 = -3\alpha_i+2$, and $C^i_3 = -1$.
Each of the equations $\phi_i(\beta)=0$ and $\psi_i(\beta)=0$ have at most three roots namely $(\beta_1,\beta_2,\beta_3)$ for every fixed $\alpha_i$, which means for every fixed $\alpha_i$ the three points namely $(\alpha_i, \beta_j)$ for $j=1,2,3$ are the three points that lie on the implicit curve given by (\ref{imp}). Since $(\alpha,\beta)$ are positive real parameters in the system, therefore, at every fixed $\alpha_i$, only the positive real roots $\beta_j$ are considered and any root that is either real negative or complex is ignored. 

Solving (\ref{Ineq}) alone serves to determine the boundaries of the complex region, but does not tell us, which side of these implicit curves correspond to real eigenvalues and which side corresponds to complex eigenvalues. This can be decided, by finding another curve on which the eigenvalues are purely imaginary with zero real parts. Since the real part of the eigenvalues is given by $\frac{1}{2}T(\alpha,\beta)$ and the determinant $D(\alpha,\beta)=(\alpha+\beta)^2$ is strictly positive, therefore, the choice of $(\alpha,\beta)$ that solves the equation
\begin{equation}
T(\alpha,\beta) = 0  \qquad \iff \qquad \frac{\beta-\alpha-(\beta+\alpha)^3}{\beta+\alpha}=0,
\label{T}
\end{equation}
will ensure that the eigenvalues of the system are purely imaginary. Solving (\ref{T})  is equivalent to finding the set of $(\alpha,\beta) \in \mathbb{R}^2_+$ such that the equation $h(\alpha,\beta)=0$ is true, where $h$ is given by
\[
h_i(\beta) = C_0^i+C_2^i\beta+C_2^i\beta^2+C_3^i\beta^3,
\]
with coefficients
$C_0^i =-\alpha_i-\alpha_i^3,$
$C_1^i =1-\alpha_i^2,$
$C_2^i =-3\alpha_i,$ and 
$C_3^i =-1.$

The curve satisfying equation (\ref{T}) forms the boundary for the positive and negative real parts of the eigenvalues $\lambda_{1,2}$ when they are complex conjugate pair, because on $h_i(\beta)=0$ the real part of the eigenvalues are zero and hence changes the sign by moving across it. The positive real combination of $(\alpha,\beta)$ that satisfies (\ref{T}) is given by the red $c2$ curve in Figure \ref{fig2}. The location of the $c2$ curve also indicates that the region between the curves $c1$ and $c3$ results in that $\lambda_{1,2}$ are complex numbers. If $(\alpha,\beta)$ are chosen from values on curves $c1$ and $c3$, then the eigenvalues will be repeated real roots, since on these curves the discriminant is zero. However if $(\alpha,\beta)$ are chosen from the values on the curve $c2$ then the eigenvalues are purely imaginary of the form
$\lambda_{1,2} = \pm i\sqrt{4D(\alpha,\beta)} =\pm i2(\alpha+\beta)$.
This implies that the steady state $(u_s,v_s)$ is a centre with the system exhibiting periodic oscillations around the uniform steady state. It is worth noting that $h_i(\beta) =0$ is a sufficient condition for $(u_s,v_s)$ to be a centre, due to the strictly positive expression for $D(\alpha,\beta) =(\alpha+\beta)^2$ in the discriminant. Figure \ref{fig2}, was simulated for $\alpha_{max}=\beta_{max}= 5$ and $N = 5000$, which means the curve $c1$, $c2$ and $c3$ are respectively formed from the positive real roots of $5000$ cubic polynomials of $\psi_i(\beta)=0$, $h_i(\beta)=0$ and $\phi_i(\beta)=0$ for $i=1,...,5000$.

The curves $c1$ and $c3$ in Figure \ref{fig2} (a) are the critical boundaries, at which the type of the steady state $(u_s,v_s)$ changes from node to a spiral or vice versa, depending on whether one enters or exists the region between $c1$ and $c3$. The real part of $\lambda_{1,2}$ is studied to find the stability classification of the parameter space, when $\lambda_{1,2}$ are complex conjugate pair. Knowing that the region bounded by $c1$ and $c3$ curves in Figure \ref{fig2} (a) results in $\lambda_{1,2}$ to be complex, the next step is to find which part of this region corresponds to positive real part of the complex numbers $\lambda_{1,2}$, and which part of this region corresponds to negative real part of $\lambda_{1,2}$. In order to determine their stability we must analyse the real part of $\lambda_{1,2}$, simultaneously with the assumption that the roots are a pair of complex conjugate numbers, which is given by
\begin{equation}
   T(\alpha,\beta)=\frac{\beta-\alpha-(\beta+\alpha)^3}{\beta+\alpha}.
   \label{Ineq2}
\end{equation} 
If the sign of (\ref{Ineq2}) is negative in the region between the curves $c1$ and $c3$ in Figure \ref{fig2} (a), then $(u_s,v_s)$ is a stable spiral, if it is positive then it is an unstable spiral. This means that the steady state $(u_s,v_s)$ is a stable spiral if $(\alpha,\beta)$ simultaneously satisfy (\ref{fiq}) and the inequality
$ \beta -\alpha < (\alpha+\beta)^3$ holds,
which corresponds to the green region A shown in Figure \ref{fig2} (b). Similarly, the steady state $(u_s,v_s)$ is an unstable spiral if parameters $(\alpha,\beta)$ simultaneously satisfy (\ref{fiq}) and the inequality $ \beta -\alpha > (\alpha+\beta)^3$,
which is the yellow region B shown in Figure \ref{fig2} (b). 
It is worth noting that the trace $T(\alpha,\beta)$ for positive values of $\alpha$ and $\beta$ is bounded by the value of 1 [see section 4. Theorem \ref{theorem2}], which means
$\frac{\beta-\alpha-(\alpha+\beta)^3}{\beta+\alpha}\leq 1$ for all $\alpha,\beta\in \mathbb{R}_+$.
\subsubsection{Numerical demonstration of complex eigenvalues}
The numerical package for ordinary differential equations {\it ode45} in MATLAB was employed to simulate and visualise the local behaviour of the system (\ref{C}) near the uniform steady state $(u_s,v_s)$. Each simulation was run up to a final time $T=5$, starting at the initial time $T=0$, with time step $\tau = 5\times 10^{-2}$. Parameter values were chosen from the regions $A$, $B$ and on the curve $c2$ from Figure \ref{fig2} (b) and Figure \ref{fig2} (a) respectively, which are then plotted with 8 trajectories around the uniform steady state $(u_s,v_s)$. Figure \ref{phase} shows that the behaviour of the trajectories around the uniform steady state in each case is in agreement with the theoretically predicted type of $(u_s,v_s)$. The summary of the chosen numerical values for these simulations are all summarised in Table \ref{Table1} and these are also indicated in the relevant captions for each figure. The eigenvalues of the stability matrix for each choice of the parameters and at the corresponding uniform steady state are presented in the corresponding captions. In each simulation the vertical axes corresponds to the variable $v$ and the horizontal axes corresponds to the variable $u$. The red trajectories show the interactive behaviour of $u$ and $v$ in relation to the corresponding steady state.
\begin{figure}[H]
 \centering
 \small
  \begin{subfigure}[h]{.495\textwidth}
    \centering
 \includegraphics[width=\textwidth]{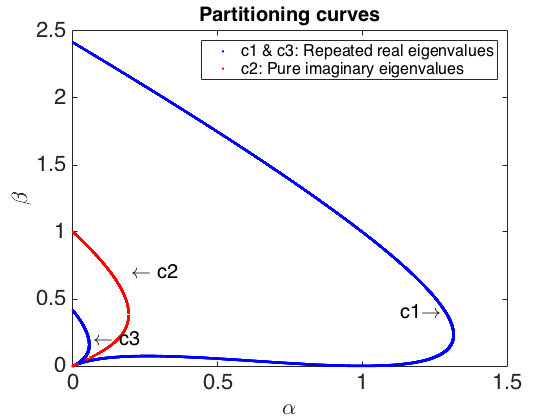}
 \caption{The choice for $\alpha$ and $\beta$ on curves\\ $c1$ and $c3$ make $(u_s,v_s)$ to be star node,\\ whereas $\alpha$ and $\beta$ on the curve $c2$ make  \\$(u_s,v_s)$ to be a centre with periodic \\oscillations.}
 \label{paramfig8}
 \end{subfigure}
 \begin{subfigure}[h]{.495\textwidth}
 \centering
    \includegraphics[width=\textwidth]{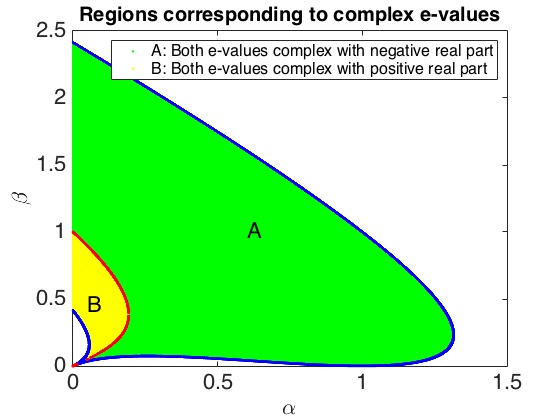}
  \caption{The choice for $\alpha$ and $\beta$ in the green region $A$ results in $(u_s,v_s)$ being a stable spiral, whereas $\alpha$ and $\beta$ from the yellow region $B$ results in $(u_s,v_s)$ being an unstable spiral.}
  \label{paramfig2}
 \end{subfigure}
 \caption{Stable and unstable regions when the steady state $(u_s,v_s)$ is a spiral stationary point.}
  \label{fig2}
\end{figure}
\begin{figure}[!ht]
 \centering
 \small
  \begin{subfigure}[h]{0.328\textwidth}
 \centering
    \includegraphics[width=\textwidth]{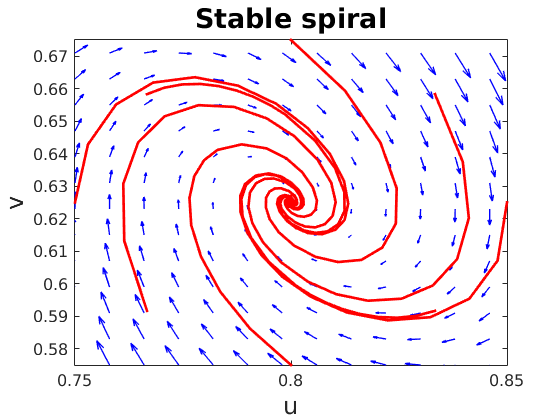}
  \caption{Parameters $(\alpha,\beta)=\\(0.4,0.4)$, resulting in the\\ uniform steady state \\$(u_s,v_s)=(0.8,0.625)$  \\and eigenvalues $\lambda_{1,2} =\\ -0.32\pm0.73i$.}
  \label{stablespiral}
 \end{subfigure}
  \begin{subfigure}[h]{0.328\textwidth}
 \centering
 \includegraphics[width=\textwidth]{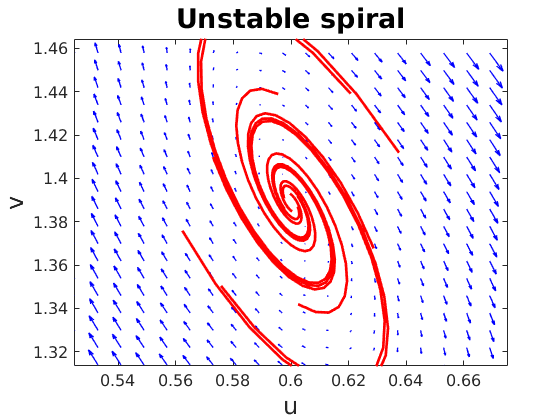}
  \caption{Parameters $(\alpha,\beta)=\\(0.1,0.5)$, resulting in the\\ uniform steady state \\$(u_s,v_s)=(0.6, 1.39)$  \\and eigenvalues $\lambda_{1,2}=\\ 0.15\pm0.58i$.}
  \label{unstablespiral}
 \end{subfigure}
  \begin{subfigure}[h]{0.328\textwidth}
 \centering
    \includegraphics[width=\textwidth]{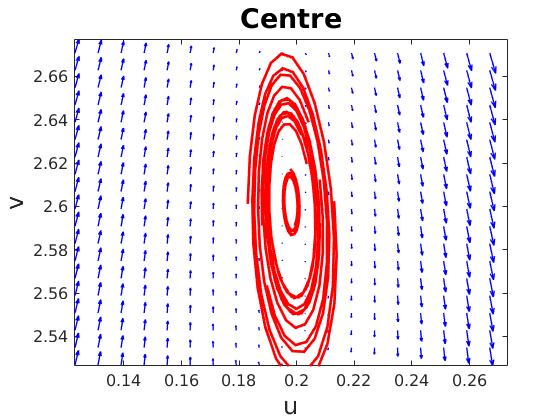}
  \caption{Parameters $(\alpha,\beta)=\\(0.096,0.102)$, resulting in the\\ uniform steady state \\$(u_s,v_s)=(0.198,2.6)$  \\and eigenvalues $\lambda_{1,2}=\\ \pm0.20 i$.}
  \label{centre}
 \end{subfigure}
 \caption{Phase-plane diagrams of the model system in the absence of diffusion for types of steady states with complex $\lambda_{1,2}$, computed correct up to 2 decimal places and $i = \sqrt{-1}$.}
  \label{phase}
\end{figure}

\subsubsection{Interpretation of the dynamics for the case of complex eigenvalues}
The physical interpretation of this classification is that, whenever the choice of $\alpha$ and $\beta$ is taken from region $B$ in Figure \ref{fig2} then the dynamics of the perturbed system near the constant steady state $(u_s,v_s)$ will spirally move away as time grows, which resembles the shape of an outward spreading spiral. This behaviour is shown in Figure \ref{phase} (b). However if the parameter values are chosen from region $A$ then the dynamics of the perturbed system will spirally move in towards the constant stable steady state $(u_s,v_s)$, therefore, any trajectory near the steady state forms the shape of spiral that is uniformly moving towards its centre, which is the constant steady state namely the point $(u_s,v_s)$. This behaviour is shown in Figure \ref{phase} (a). If the values of the parameters are chosen from those on the curve $c_2$, then the dynamics of the system behaves in such a way that it neither goes away from the steady state, and nor does it reach the steady state, instead the trajectories of the system move on fixed orbits around the uniform steady state $(u_s,v_s)$, forming either circles or ellipse. This behaviour of $u$ and $v$ is shown in Figure \ref{phase} (c). 
\subsection{Analysis for the case of real eigenvalues}\label{absencereal}
When both eigenvalues of the system are real numbers at the steady state $(u_s,v_s)$ then the uniform steady state becomes a node. If $\lambda_{1,2}$ is a positive repeated root, then the steady state is an unstable star node. If $\lambda_{1,2}$ is a negative repeated root, then the steady state is a stable star node. Parameter space resulting in $(u_s,v_s)$ to be a node can be analysed through studying the sign of $\lambda_{1,2}$, when both eigenvalues are real. This consequently means that the discriminant has to satisfy the inequality 
\begin{equation}
   T^2(\alpha,\beta)-4D(\alpha,\beta) = \Big(\frac{\beta-\alpha-(\beta+\alpha)^3}{\beta+\alpha}\Big)^2-4(\beta+\alpha)^2 \geq 0. 
   \label{real}
\end{equation}
Similarly, to the previous section, treating the equal case of (\ref{real}) first  will provide the boundaries between the real and complex regions for $\lambda_{1,2}$. These are $c1$ and $c3$ curves in Figure \ref{fig2} from the previous section. On $c1$ and $c3$ it was concluded that $\lambda_{1,2}$ are repeated real roots, which are precisely 
$\lambda_{1,2} = \frac{1}{2} \left( \frac{\beta-\alpha-(\beta+\alpha)^3}{\beta+\alpha}\right)$.
In order to determine the stability of $(u_s,v_s)$, when $(\alpha,\beta)$ are on $c1$ and $c3$ blue curves in Figure \ref{fig2}, it is required to find the classification of this curve on which the sign of the repeated eigenvalue $\lambda_{1,2}$ is positive or negative.
The steady state $(u_s,v_s)$ is a star, if $\lambda_{1,2}$ is repeated real values, which means that the discriminant must be zero
\begin{equation}
   T^2(\alpha,\beta)-4D(\alpha,\beta) = \Big(\frac{\beta-\alpha-(\beta+\alpha)^3}{\beta+\alpha}\Big)^2-4(\beta+\alpha)^2 = 0. 
   \label{equal}
\end{equation}
The steady state $(u_s,v_s)$ is a stable star if the choice of parameters $(\alpha,\beta)$ simultaneously satisfy (\ref{equal}) and the inequality
$
\beta -\alpha < (\alpha+\beta)^3,
$
which is the $c1$ curve in Figure \ref{fig4} (a) in blue colour.
The steady state $(u_s,v_s)$ is unstable star if the choice of parameters $(\alpha,\beta)$ simultaneously satisfy (\ref{equal}) and the inequality
$
\beta -\alpha > (\alpha+\beta)^3,
$
which is the $c3$ curve in Figure \ref{fig4} (a) in red colour.

The remaining region outside $c1$ and $c3$ curves in Figure \ref{fig4} (a) can be classified into two parts, one where $(\alpha,\beta)$ satisfy (\ref{real}) and yet both $\lambda_{1,2}$ are distinct negative real values. This means that $(\alpha,\beta)$ must satisfy the inequality 
\[
\lambda_{1,2}=\frac{1}{2}\frac{\beta-\alpha-(\beta+\alpha)^3}{\beta+\alpha} \pm \sqrt{\Big(\frac{\beta-\alpha-(\beta+\alpha)^3}{\beta+\alpha}\Big)^2-4(\beta+\alpha)^2} < 0,
\]
such that both eigenvalues are negative real. This is true if $(\alpha,\beta)$ satisfy the inequality
\begin{equation}
\frac{1}{2}\frac{\beta-\alpha-(\beta+\alpha)^3}{\beta+\alpha} + \sqrt{\Big(\frac{\beta-\alpha-(\beta+\alpha)^3}{\beta+\alpha}\Big)^2-4(\beta+\alpha)^2} < 0.
\label{bothneg}
\end{equation}
The region in the $(\alpha,\beta)\in \mathbb{R}_+$ plane satisfying (\ref{bothneg}) is denoted by A in Figure \ref{fig4} (b) shaded in green colour. The remaining region to analyse is the region outside the $c3$ curve in Figure \ref{fig4} (a). Any combination of  $(\alpha,\beta)$ from this region corresponds to the remaining 2 cases, in which either one of $\lambda_1$ or $\lambda_2$ is positive, causing $(u_s,v_s)$ to be a saddle point which is unstable by definition. The other case is if $\lambda_{1,2}$ are both positive real values, which makes $(u_s,v_s)$ to be an unstable node. The region satisfying the criteria for a saddle point or unstable node is indicated by B in Figure \ref{fig4} (b) in yellow colour. It is worth noting that the steady state $(u_s,v_s)$ with either one or both positive real eigenvalues is by definition a saddle point or an unstable node, respectively. Under the current classification these both types fall under one category namely unstable node. The fact that one of the eigenvalues is real positive, hence as time grows very large, the behaviour of the solution is similar to that of an unstable node.  
\begin{figure}[H]
 \centering
 \small
  \begin{subfigure}[h]{.495\textwidth}
    \centering
 \includegraphics[width=\textwidth]{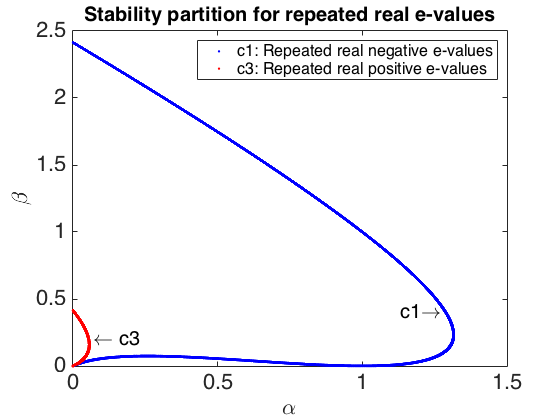}
 \caption{The combination of $(\alpha,\beta)$ from $c1$, $c2$ \\(blue  and red curves) results in $(u_s,v_s)$\\ being an unstable star and a stable star \\  respectively, which are both of type node.}
 \label{paramfig7}
 \end{subfigure}
 \begin{subfigure}[h]{.495\textwidth}
 \centering
    \includegraphics[width=\textwidth]{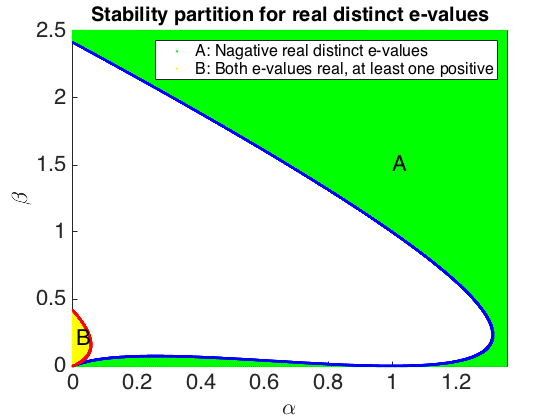}
  \caption{The combination of $(\alpha,\beta)$ from the yellow $B$ and green $A$ regions results in $(u_s,v_s)$ being an unstable and a stable node respectively.}
  \label{paramfig9}
 \end{subfigure}
 \caption{Stable and unstable regions when the steady state $(u_s,v_s)$ is a node.}
  \label{fig4}
\end{figure}
\subsubsection{Numerical demonstration for the case of real eigenvalues}
The system (\ref{C}) is numerically simulated using {\it ode45} package in MATLAB to visualise the behaviour of trajectories in the neighbourhood of $(u_s,v_s)$, when $\lambda_{1,2}$ are real values. The final time and time steps are chosen exactly the same as for complex eigenvalues case in the previous section. Each simulation is tested on phase plane diagram, to observe the trajectories of solution. In each case the steady state $(u_s,v_s)$ is computed and 8 trajectories around the steady state whose behaviour is governed by the relevant vector-field are plotted.
\begin{figure}[!ht]
 \centering
 \small
  \begin{subfigure}[h]{0.49\textwidth}
    \centering
 \includegraphics[width=\textwidth]{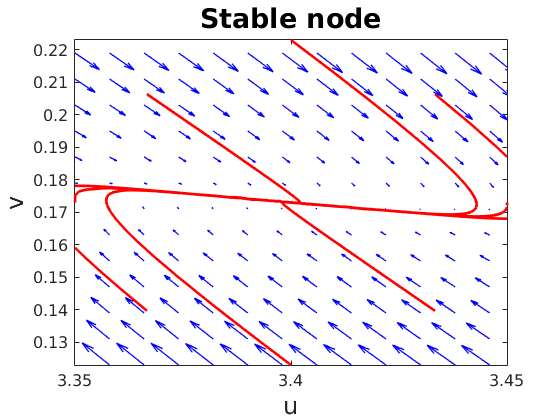}
 \caption{Parameters $(\alpha,\beta)=(1.4,2.0)$,\\ resulting in the uniform steady state \\$(u_s,v_s)=(3.4,0.173)$ and eigenvalues \\ $\lambda_{1,2}= (-10.26, -1.13)$.}
 \label{stablenode}
 \end{subfigure}
 \begin{subfigure}[h]{0.49\textwidth}
 \centering
    \includegraphics[width=\textwidth]{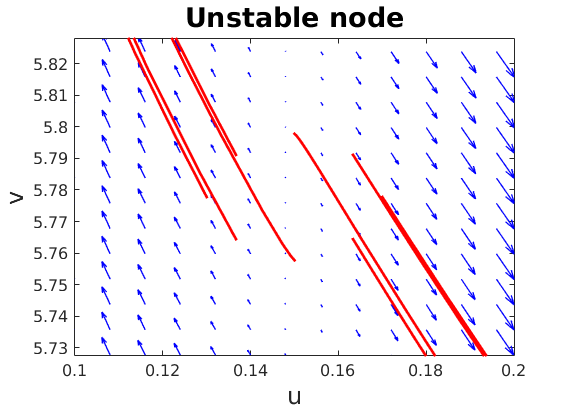}
  \caption{Parameters $(\alpha,\beta)=(0.02,0.13)$,\\ resulting in the uniform steady state\\ $(u_s,v_s)=(0.15,5.78)$  and eigenvalues\\ $\lambda_{1,2}= (0.03, 0.68)$.}
  \label{unstablenode}
 \end{subfigure}
  \begin{subfigure}[h]{0.49\textwidth}
 \centering
    \includegraphics[width=\textwidth]{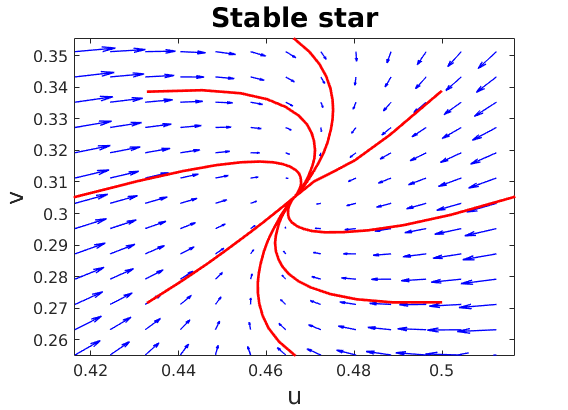}
  \caption{Parameters $(\alpha,\beta)=(0.40,0.066)$,\\ resulting in the uniform steady state \\$(u_s,v_s)=(0.8,0.625)$  and eigenvalues\\ $\lambda_{1,2} = (-0.47,-0.47)$.}
  \label{stablestar}
 \end{subfigure}
  \begin{subfigure}[h]{0.49\textwidth}
 \centering
    \includegraphics[width=\textwidth]{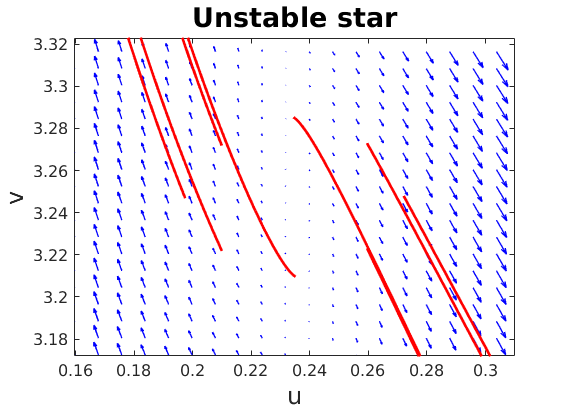}
  \caption{Parameters $(\alpha,\beta)=(0.06,0.18)$,\\ resulting in the uniform steady state \\$(u_s,v_s)=(0.23,3.25)$  and eigenvalues\\ $\lambda_{1,2} = (0.20,0.20)$.}
  \label{unstablestar}
 \end{subfigure}
 \caption{Phase-plane diagrams of the model system in the absence of diffusion characterising the stability of the steady states for real $\lambda_{1,2}$.}
  \label{fig6}
\end{figure}
\subsubsection{Interpretation of the dynamics for the case of real eigenvalues}
If the values of $\alpha$ and $\beta$ are selected from those on the curves $c_1$ or $c_3$ then the dynamics of the perturbed system will either radially move away or towards the point of the constant steady state $(u_s,v_s)$, forming the shape of a star. If parameters are selected from the curve $c_1$ in Figure \ref{fig4} (b), then the eigenvalues are positive real repeated roots. The dynamics of the perturbed system near the constant steady steady state will radially move outward (see Figure \ref{fig6} (d)). If the values of $\alpha$ and $\beta$ are chosen from those lying on the $c_3$ curve in Figure \ref{fig4} (b), then the local dynamics of the system near the constant steady state radially move inward towards the point $(u_s,v_s)$ (see Figure \ref{fig6} (c)). Similarly, if $\alpha$ and $\beta$ are chosen from region $A$ in Figure \ref{fig4} (b), it will result in $\lambda_{1,2}$ to be a pair of distinct negative real roots. This means for the dynamics of the system to move towards the stable constant steady state. The trajectories near by the steady state in this case move towards a single point namely $(u_s,v_s)$. If parameters are however chosen from region $B$ in Figure \ref{fig4} (b), this results in the constant steady state to be an unstable node, it means the behaviour of trajectories near the steady state are expected to behaviour similarly to those corresponding to the curve $c_2$ (see Figure \ref{fig6} (a)).

\subsection{Summary of parameter classification in the absence of diffusion}\label{fullclass}
In general, the parameter space that influences the nature of the uniform steady state $(u_s,v_s)$ can be categorised into four different regions, which are separated by three curves. This classification is in principle equivalent to that presented in \cite{book1}, however the distinction with the current work is that for the specific system given by (\ref{C}) the full parameter plane is classified subject to the proposed theory given in \cite{book1}. Each region is characterised and the behavior of the uniform steady state  is established by choosing parameters $\alpha$ and $\beta$ from the region. Figure \ref{fig5} (a) summarises the full classification of the parameter space in the absence of diffusion for the uniform steady state $(u_s,v_s)$. It can be noted that in terms of stability only and irrespective of the type of the steady state, the full parameter space namely the $(\alpha,\beta) \in \mathbb{R}_+$ plane can be classified into two regions namely stable or unstable, this is the partition established by the location of the yellow curve in Figure \ref{fig5} (a). Stability classification irrespective of the type of steady state is explicitly presented in Figure \ref{fig5} (b). The purpose of stability partition will prove beneficial in the next section when the diffusion-driven instability is explored. Because diffusion-driven instability as the name suggests, is to explore how a stable steady state becomes unstable for the same choice of parameters $(\alpha,\beta)$, when diffusion is added to the system. Therefore, with the help of stability partition, one is able to observe the change in the location of the yellow curve in Figure \ref{fig5} (b).  Table \ref{Table1} shows the values of the parameters $(\alpha,\beta)$ and the values of the unique steady state $(u_s,v_s)$ with the relative eigenvalues that corresponds to the summary of full classification of parameter space presented in Figure \ref{fig5} (a), and these parameter values are used in the numerical simulations. 
\begin{figure}[H]
 \centering
 \small
  \begin{subfigure}[h]{.49\textwidth}
    \centering
 \includegraphics[width=\textwidth]{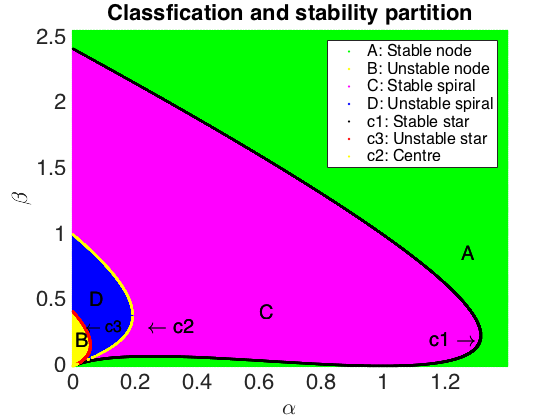}
 \caption{Parameter space classification that \\results in the steady state $(u_s,v_s)$ to be \\of different types, and their corresponding\\ stability as indicated in the legend.}
 \label{paramclass}
 \end{subfigure}
 \begin{subfigure}[h]{.49\textwidth}
 \centering
    \includegraphics[width=\textwidth]{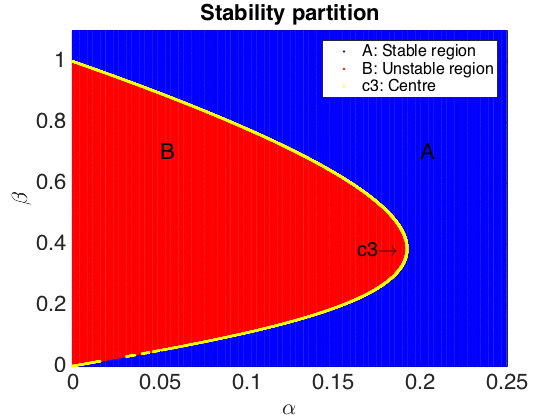}
  \caption{Parameters from red region make $(u_s,v_s)$ unstable, whereas any combination of $(\alpha,\beta)$ from the blue region make $(u_s,v_s)$ stable,  and parameters lying on  the yellow curve results in $(u_s,v_s)$ being a centre.}
  \label{paramstable}
 \end{subfigure}
 \caption{The full parameter classification for the stability and types of the steady state $(u_s,v_s)$.}
  \label{fig5}
\end{figure}

\begin{table}[ht]
\noindent\adjustbox{max width=\textwidth}{
\begin{tabular}{|c |c |c |c|c|c|}
\hline
 Name (SS)& $(\alpha,\beta)$& $(u_s,v_s)$&$\lambda_{1,2}$& Fig \ref{fig5}(a)& Phaseplane \\
\hline
\cline{1-2}
Stable node & $(1.400, 2.000)$ & $(3.400, 0.173)$  & (-10.26, -1.13) & Region A & Fig \ref{fig6} (a)\\
Unstable node & $(0.020, 0.130)$ & $(0.150, 5.780)$ & (0.03, 0.68)& Region B & Fig \ref{fig6} (b)\\
Stable spiral	& $(0.400, 0.400)$ & $(0.800, 0.625)$   & $ -0.32\pm0.73i$ &Region C & Fig \ref{phase} (a)\\
Unstable spiral  & $(0.100, 0.500)$ & $(0.600, 1.390)$ & $0.15\pm0.58i$  &Region D &Fig \ref{phase} (b)\\
Stable star  & $(0.400, 0.066)$ & $(0.470, 0.310)$   &  (-0.47, -0.47) &Curve $c1$ &Fig \ref{fig6} (c)\\
Unstable star  & $(0.060, 0.180)$ & $(0.230, 3.250)$ &  (0.20, 0.20) & Curve $c3$ &Fig \ref{fig6} (d)\\
Centre & $(0.096, 0.102)$& $(0.198, 2.600)$& $\pm0.20 i$ &Curve $c2$&Fig \ref{phase} (c)\\
\hline
\end{tabular}}
\caption{Table showing the summary of all types of steady states and their corresponding phase plane diagrams with reference to the parameter space on $(\alpha,\beta)\in \mathbb{R}_+$ plane.}
\label{Table1}
\end{table}

\section{Stability analysis in the presence of diffusion}\label{presence}
It is intuitive to understand that normally diffusion serves to enhance spatial homogeneity of concentration gradients \cite{book1, paper26}, however, in the case of reaction-diffusion systems, it has the opposite effect in that the uniform steady state which was stable in the absence of diffusion becomes unstable in its presence \cite{thesis1, thesis2, book1, paper26}. In this work, the mathematical rigor of how reaction and diffusion together can give rise to instability is investigated, which is responsible for diffusion-driven pattern formation. One of the explicit aims of the current work is to investigate the consequences of the conditions of diffusion-driven instability on the classification of the parameter space in the presence of diffusion. Furthermore, we want to explore quantitative relationship between the domain size and the types of diffusion-driven instability. 

\subsection{Linear stability analysis}
 The uniform steady state $(u_s,v_s)$ as previously mentioned satisfies the system with diffusion \eqref{B} and the zero-flux boundary conditions \eqref{zerofluxBCs}, therefore, without loss of generality, $(u_s,v_s)$ is considered a steady state of the system in presence of diffusion as well.
System (\ref{B}) is first linearised by introducing new variables namely $\bar{u}$ and $\bar{v}$ in such a way that they are perturbed slightly from the steady state $(u_s,v_s)$, so that their relationships to $u$ and $v$ are 
$(\bar{u}, \bar{v})= (u - u_s, v - v_s)$.
For $u$ and $v$, the perturbed variables $\bar{u}+u_s$ and $\bar{v}+v_s$ are substituted in (\ref{B}) and expanded  using Taylor expansion for functions of two variables to obtain a linearised system which can be written in matrix form as
\begin{equation}
\frac{\partial}{\partial t}\left[\begin{array}{c}
     \bar{u}  \\
     \bar{v} 
\end{array}\right]=\left[\begin{array}{cc}
     1&0  \\
     0&d 
\end{array}\right]\left[\begin{array}{c}
     \triangle \bar{u} \\
     \triangle \bar{v} 
\end{array}\right]+\left[\begin{array}{cc}
     \frac{\partial f}{\partial u}(u_s,v_s)& \frac{\partial f}{\partial v}(u_s,v_s)  \\
     \frac{\partial g}{\partial u}(u_s,v_s)&\frac{\partial g}{\partial u}(u_s,v_s)
\end{array}\right]\left[\begin{array}{c}
     \bar{u} \\
     \bar{v} 
\end{array}\right].
\label{vect}
\end{equation}
In order to complete the linearisation in the presence of diffusion, it is necessary to find the eigenfunctions of the Laplace operator, that satisfies the homogeneous Neumann boundary conditions. Eigenfunctions of the Laplace operator on planar domains are well-studied in the literature \cite{paper9, paper11, paper17, thesis2}. It is presumed that $\bar{u}$ and $\bar{v}$ both are of a similar form. The eigenfunctions for the Laplace operator are found by solving the relevant eigenvalue problems, that satisfy the given boundary conditions of problem (\ref{B}). Such an eigenvalue problem is of the form
\begin{equation}    \label{eigen}
\begin{cases}
    \frac{\partial^2 \bar{u}}{\partial x^2}+\frac{\partial^2 \bar{u}}{\partial y^2} = \eta \bar{u}, \qquad \eta \in \mathbb{R},
\\
\frac{\partial \bar{u}}{\partial x}(0,y) = 0,  \qquad \frac{\partial \bar{u}}{\partial x}(L,y) = 0,\qquad 0 \leq y \leq L,\\
\frac{\partial \bar{u}}{\partial y}(x,0) = 0,  \qquad \frac{\partial \bar{u}}{\partial y}(x,L) = 0,\qquad 0 \leq x \leq L.
\end{cases}
\end{equation}
The method of separation of variables is used to find the solution to problem (\ref{eigen}) by presuming that the solution is in the form of a product of two functions namely $X(x)$ and $Y(y)$, so $\bar{u}$ has the form
   $ \bar{u}(x,y) = X(x)Y(y)$.
This form of solution is substituted in equation (\ref{eigen}) to obtain two one-dimensional eigenvalue problems, which are individually solved and the resulting set of eigenfunctions solving (\ref{eigen}) are given  by
$\bar{u}_{n,m}(x,y) = C_{n,m}\cos{\big(\frac{n\pi x}{L}\big)}\cos{\big(\frac{m \pi y}{L}\big)}$,
where $C_{n,m}$ are the coefficients depending on the mode of the eigenfunctions.
The solution to problem (\ref{vect}) can be, therefore, written as the sum of infinite expansion in the form of the product of $T(t)$, and the eigenfunctions for the two dimensional Laplace operator, so we have
\[\begin{split}
\bar{u}(x,y,t) = \sum_{n=0}^\infty \sum_{m=0}^\infty U_{n,m} \exp(\lambda_{n,m} t) \cos{\big(\frac{n\pi x}{L}\big)}\cos{\big(\frac{m \pi y}{L}\big)},\\
\bar{v}(x,y,t) = \sum_{n=0}^\infty \sum_{m=0}^\infty V_{n,m}  \exp(\lambda_{n,m} t) \cos{\big(\frac{n\pi x}{L}\big)}\cos{\big(\frac{m \pi y}{L}\big)},
\end{split}\]
where $U_{n,m}$ and $V_{n,m}$ are the coefficients of the infinite expansion.
Substituting this form of solution in (\ref{vect}), the problem can be written as a two-dimensional discrete set of algebraic equations in the form, with $\lambda = \lambda_{n,m}$
\begin{equation}
\lambda \left[\begin{array}{c}
     \bar{u} \\
     \bar{v} 
\end{array}\right]=-\frac{(n^2+m^2) \pi^2}{L^2}\left[\begin{array}{cc}
     1&0  \\
     0&d 
\end{array}\right]\left[\begin{array}{c}
     \bar{u}\\
     \bar{v} 
\end{array}\right]+\gamma \left[\begin{array}{cc}
     \frac{\beta-\alpha}{\beta+\alpha}& (\beta+\alpha)^2  \\
     -\frac{2\beta}{\beta+\alpha}&-(\beta+\alpha)^2
\end{array}\right]\left[\begin{array}{c}
     \bar{u} \\
     \bar{v}
\end{array}\right],
\label{vect1}
\end{equation}
which can also be written as a two-dimensional discrete eigenvalue problem 
\begin{equation}
\left[\begin{array}{cc}
     \gamma \frac{\beta-\alpha}{\beta+\alpha}-\frac{(n^2+m^2) \pi^2}{L^2}& \gamma (\beta+\alpha)^2  \\
     -\gamma \frac{2\beta}{\beta+\alpha}&-\gamma(\beta+\alpha)^2-d\frac{(n^2+m^2) \pi^2}{L^2}
\end{array}\right]\left[\begin{array}{c}
     \bar{u} \\
     \bar{v}
\end{array}\right]=\lambda \left[\begin{array}{c}
     \bar{u} \\
     \bar{v} 
\end{array}\right].
\label{vect2}
\end{equation}
For the stability of the steady state $(u_s,v_s)$ in the presence of diffusion, the eigenvalues of problem (\ref{vect2}) are analysed. The characteristic polynomial for (\ref{vect2}) is given by 
\begin{equation}
\left|\begin{array}{cc}
     \gamma \frac{\beta-\alpha}{\beta+\alpha}-\frac{(n^2+m^2) \pi^2}{L^2} -\lambda& \gamma (\beta+\alpha)^2  \\
     -\gamma \frac{2\beta}{\beta+\alpha}&-\gamma(\beta+\alpha)^2-d\frac{(n^2+m^2) \pi^2}{L^2}-\lambda
\end{array}\right|=0,
\label{vect4}
\end{equation}
which is a quadratic polynomial in $\lambda$. Let $\mathcal{T}$ and $\mathcal{D}$ denote the trace and determinant of the stability matrix (\ref{vect2}) then the pair of eigenvalues $\lambda_{1,2}$ of system (\ref{vect2}) satisfy the quadratic polynomial in terms of the trace and determinant in the form  
\begin{equation}
\lambda^2 - \mathcal{T}(\alpha, \beta) \lambda + \mathcal{D}(\alpha, \beta) = 0,
\label{charac}
\end{equation}
where $\mathcal{T}(\alpha, \beta)$ and $\mathcal{D}(\alpha, \beta)$  are given by 
\[\begin{split}
\mathcal{T}(\alpha, \beta) =& \gamma \frac{\beta - \alpha  - (\beta +\alpha)^3}{\beta +\alpha}-(d+1)\frac{(n^2+m^2) \pi^2}{L^2},\\
\mathcal{D}(\alpha, \beta) =& \Big(\gamma\frac{\beta-\alpha}{\beta+\alpha}-\frac{(n^2+m^2) \pi^2}{L^2}\Big)\Big(-\gamma(\beta +\alpha)^2-(d+1)\frac{(n^2+m^2) \pi^2}{L^2}\Big)+2\gamma^2\beta(\beta+\alpha).
\end{split}\]
The roots of (\ref{charac}) are
$\lambda_{1,2} = \frac{1}{2}\mathcal{T}(\alpha,\beta) \pm \frac{1}{2}\sqrt{\mathcal{T}^2(\alpha,\beta) -4\mathcal{D}(\alpha,\beta)}$.
We note that, in comparison to the case when no diffusion is present, the trace $\mathcal{T}(\alpha,\beta)$ of the Jacobian matrix  becomes more negative when diffusion is introduced since a positive quantity is being subtracted, thereby enhancing the stability of the system. This is equivalent to the default intuition that diffusion may only promote stability, due to its natural property of homogenising concentration gradient. However, the modification to determinant part $\mathcal{D}(\alpha,\beta)$ in the the form of an additional positive term $ (d+1)\pi^4\frac{(n^2+m^2)^2}{L^4}$ is intuitively expected to dominate the strictly positive term which is subtracted from  $\mathcal{T}(\alpha,\beta)$, namely $(d+1)\frac{(n^2+m^2) \pi^2}{L^2}$. These modifications to the trace and the determinant of the stability matrix due to the presence of diffusion to the system are investigated thoroughly to derive possible quantitative relationships between domain size, reaction and diffusion rates in the context of the parameter classification. We also attempt to quantitatively understand how this modification of both $\mathcal{T}(\alpha,\beta)$ and $\mathcal{D}(\alpha,\beta)$ contributes to creating instability. The regions of the parameter space for all types of steady states and their stability is analysed. Once the regions for different types of steady states are classified, then subject to this classification the influence of the domain size $\Omega$ is explored on the types and existence of diffusion-driven instabilities. 

\subsection{Analysis for the case of complex eigenvalues}
To determine the types of steady state $(u_s,v_s)$, with corresponding parameter space, it is important to realise that despite the dependence on $\alpha$ and $\beta$, the type of steady state also depends on the additional parameters namely $\gamma$, $d$ and $L$. The variation of the additional parameters are investigated for fixed $n$ and $m$ to explore how the parameter space varies for different values of $d$, for example. Another important relationship to investigate is between the area of the domain $L^2$ and the reaction rate $\gamma$ and how these influence the process of diffusion-driven instability. First, we analyse the case of complex eigenvalues. The eigenvalues $\lambda_{1,2}$ are complex values if $(\alpha,\beta)$ are chosen such that the discriminant of (\ref{charac}) is negative. It means that $(\alpha,\beta)$ must satisfy  
\begin{equation}
\frac{1}{4}\mathcal{T}^2(\alpha,\beta)<\mathcal{D}(\alpha,\beta).
\label{spiral}
\end{equation}
One immediate consequence of this inequality is that $\mathcal{D}(\alpha,\beta)$ must be positive, because if $\mathcal{D}(\alpha,\beta)$ is negative, then for no choice of $(\alpha,\beta)$ do the eigenvalues $\lambda_{1,2}$ become complex. This immediate observation is automatically in agreement with the conditions for diffusion-driven instability presented in \cite{paper9, paper17, paper20}. Furthermore, this condition on one hand will determine the region of the $(\alpha,\beta)\in\mathbb{R}_+$ plane on which $\lambda_{1,2}$ are a complex conjugate pair. On the other hand it serves to indicate the equations for the partitioning implicit curves on which the eigenvalues change type from real to complex. Bearing in mind, that inequality (\ref{spiral}) is a sufficient condition for $\lambda_{1,2}$ to have imaginary parts, therefore, it suffices to study (\ref{spiral}) for $(u_s,v_s)$ to be a spiral.
Before numerical treatment of the critical curve for the complex region, the stability of $(u_s,v_s)$ is studied analytically given that $\lambda_{1,2}$ have imaginary parts. Stability of $(u_s,v_s)$ is determined by the sign of the real part of $\lambda_{1,2}$ when it is a pair of complex conjugate values. Given that this is the case then the real part of $\lambda_{1,2}$ is given by
\[
\text{Re}(\lambda_{1,2})=\frac{1}{2}\Big(\gamma\frac{\beta - \alpha  - (\beta +\alpha)^3}{\beta +\alpha}-(d+1)\frac{(n^2+m^2) \pi^2}{L^2}\Big).
\]
If we wish to find the parameter space for which $(u_s,v_s)$ is a stable spiral then we require the real part of both eigenvalues to be negative, which means in addition to the condition (\ref{spiral}) we want $\mathcal{T}(\alpha,\beta)$ to be negative or equivalently $(\alpha,\beta)$ must satisfy
\begin{equation}
 \gamma\frac{\beta - \alpha  - (\beta +\alpha)^3}{\beta +\alpha}<(d+1)\frac{(n^2+m^2) \pi^2}{L^2}.
\label{stabspiral}
\end{equation}
In comparison to the absence of diffusion, the condition for stability of $(u_s,v_s)$ when $\lambda_{1,2}$ are complex eigenvalues is similar to (\ref{stabspiral}) except that the right-hand side of inequality is a strictly positive real quantity. The quantity on the right-hand side of (\ref{stabspiral}) has $L^2$ in the denominator. 
 The maximum value that $T(\alpha,\beta)$ in the absence of diffusion could attain on $(\alpha,\beta)\in \mathbb{R}_+$ is 1. The expression for $\mathcal{T}(\alpha,\beta)$ in the presence of diffusion is the same as $T(\alpha,\beta)$ in the absence of diffusion except it is reduced by a strictly positive constant and multiplied by $\gamma$. This relationship indicates that the stability of $(u_s,v_s)$ when $\lambda_{1,2}$ are complex, depends on the extra parameters $L$, $d$ and $\gamma$. The maximum of the left-hand side of (\ref{stabspiral}) is the non-zero positive constant $\gamma$ [see Section 4, Theorem \ref{theorem1}], which suggests that the real part of $\lambda_{1,2}$ is negative if the term on the right-hand side of (\ref{stabspiral}) satisfies
\begin{equation}
(d+1)\frac{(n^2+m^2) \pi^2}{L^2}>\gamma \qquad \iff \qquad L^2 <\pi^2\frac{(d+1)(m^2+n^2)}{\gamma}.
\label{cond}
\end{equation}
In the presence of diffusion, if $\lambda_{1,2}$ is a pair of complex conjugate eigenvalues then the real part of $\lambda_{1,2}$ is always negative given that the area $L^2$ of $\Omega$ satisfies (\ref{cond}), otherwise no choice of $\alpha>0$, $\beta>0$ and $d>0$ can change the sign of $\text{Re}(\lambda_{1,2})$ to positive. This claim is formally proven in Theorem \ref{theorem1} as well as numerically demonstrated that no choice of $\alpha$, $\beta$ and $d$ can invalidate inequality (\ref{stabspiral}), given that $L$ satisfies (\ref{cond}).
For the first eigenvalue of the laplace operator, which corresponds to $m=n=0$, inequality (\ref{spiral}) takes exactly the same form as the inequality that was necessary for $(u_s,v_s)$ to be a spiral in the absence of diffusion, namely inequality (\ref{fiq}). Therefore, to avoid repetition the stability analysis of the steady state $(u_s,v_s)$ with $m=n=0$ is skipped and the non-trivial case is studied, i.e.  $m,n\geq1$. 
In the absence of diffusion, in the region where $(u_s,v_s)$ is a spiral, there are  sub-regions that correspond to a stable spiral as well as sub-regions that correspond to an unstable spiral. One immediate consequence of including diffusion, into the model, is that it puts the condition (\ref{cond}) on the domain size $L^2$ for the real part of $\lambda_{1,2}$ to remain negative. Similarly the real part of $\lambda_{1,2}$ becomes positive if condition (\ref{cond}) is violated. This means, diffusion-driven instability only serves to destabilise a spiral type of uniform steady state $(u_s,v_s)$ if the parameter $L$ violates inequality (\ref{cond}), which means 
\begin{equation}
L^2 >\pi^2\frac{(d+1)(m^2+n^2)}{\gamma}.
\label{cond1}
\end{equation}
Conditions (\ref{cond}) and (\ref{cond1}) both have consequences in terms of the stability of the original system. The first consequence is that if $\lambda_{1,2}$ have imaginary parts, then condition (\ref{cond}) guarantees that the real part of $\lambda_{1,2}$ is negative for all choices of $\alpha$ and $\beta$, allowing the eigenvalues to be positive if and only if they are both real repeated or real distinct (with at least one of them positive). Hence, restricting the system from undergoing \textit{Hopf} bifurcation. Similarly, if $\lambda_{1,2}$ have imaginary parts, then condition (\ref{cond}) must be violated (condition (\ref{cond1}) must be satisfied) in order for the real part of $\lambda_{1,2}$ to become positive whilst the eigenvalues are still complex conjugate pair. It means that for the original system to undergo \textit{Hopf} bifurcation, the necessary condition on the domain size is (\ref{cond1}). Both of these statements are formally proven in the form of theorems in the Section \ref{main}.
If the real part of the complex eigenvalues becomes zero, then the roots become purely imaginary, indicating that the system undergoes a time-periodic oscillations whilst experiencing spatial stability. This type of instability is referred to as \textit{Transcritical bifurcation} and this entails that the real part of $\lambda_{1,2}$ must become zero, which means
\begin{equation}
   \gamma\frac{\beta - \alpha  - (\beta +\alpha)^3}{\beta +\alpha}-(d+1)\frac{(n^2+m^2) \pi^2}{L^2}=0.
    \label{trans}
\end{equation}
Rearranging (\ref{trans}) and solving for $L^2$, the necessary condition for transcritical bifurcation is obtained in the form
\begin{equation}
    L^2 = \frac{(\beta+\alpha)(d+1)(n^2+m^2)\pi^2}{\gamma\big(\beta-\alpha-(\alpha+\beta)^3\big)}.
    \label{cond2}
\end{equation}
If the parameter $L$ satisfies condition (\ref{cond1}) simultaneously with condition (\ref{cond2}) and $\mathcal{D}(\alpha,\beta)>0$, then the system exhibits a transcritical bifurcation.

\subsection{Interpretation of the dynamics for the case of complex eigenvalues}
For a domain size satisfying (\ref{cond}) the dynamics of the system with diffusion is expected to undergo spatial patterning only. If the diffusion rate $d$ is increased, yet satisfying (\ref{cond}), instability is still expected to invade stable regions in the parameter plane, however the only type of instability one may expect is spatial and not temporal. This type of instability is referred to as \textit{Turing Instability} \cite{book1, paper4, paper26}, where a certain region in the parameter space is stable in the absence of diffusion and yet becomes spatially unstable upon adding diffusion to the system. Findings of the current study suggest that, if the relationship between $L$, $\gamma$ and $d$ is maintained as given by (\ref{cond}), the type of diffusion-driven instability is always restricted to \textit{Turing Instability}. It means that the only pattern one can achieve with (\ref{cond}) is spatially periodic pattern, with no temporal periodicity. However, if $L$, $\gamma$ and $d$ are given to satisfy condition (\ref{cond1}) then the system may undergo both temporal and/or spatial periodicity. Upon increasing the value of $d$, yet maintaining condition (\ref{cond1}) the system is expected to become temporally unstable. This type of instability is referred to as \textit{Hopf Bifurcation}, where adding diffusion to a temporally stable system, causes temporal instability. In this case one may expect the system to undergo temporal periodicity (pattern along the \textit{time-axis}). 
 \subsection{Analysis for the case of real eigenvalues}
The eigenvalues $\lambda_{1,2}$ are both real if the discriminant of the roots is either zero or positive, which in turn means that both eigenvalues are real values if the relationship between $\mathcal{T}(\alpha,\beta)$ and $\mathcal{D}(\alpha,\beta)$ is such that 
\begin{equation}
\mathcal{T}^2(\alpha,\beta) \geq 4\mathcal{D}(\alpha,\beta).
\label{condreal}
\end{equation}
The equal case of (\ref{condreal}) is looked at first, where we have
\begin{equation}
\mathcal{T}^2(\alpha,\beta) = 4\mathcal{D}(\alpha,\beta),
\label{condequal}
\end{equation}
which means that the discriminant is zero, hence the roots are repeated real values of the form $\lambda_1 = \lambda_2\in\mathbb{R}$, given by
\begin{equation}
\lambda_1=\lambda_2 = \frac{1}{2}\Big(\gamma\frac{\beta - \alpha  - (\beta +\alpha)^3}{\beta +\alpha}-(d+1)\frac{(n^2+m^2) \pi^2}{L^2}\Big).
\label{repeated}
\end{equation}
When $\alpha$ and $\beta$ satisfy condition (\ref{condequal}), the stability of the steady state is determined by the sign of the root itself. 
The expression given by (\ref{repeated}) can be easily shown to be negative if the area $L^2$ of the domain  satisfies the inequality
\begin{equation}
L^2<\pi^2\frac{(\beta+\alpha)(d+1)(n^2+m^2)}{\gamma\big(\beta-\alpha-(\beta+\alpha)^3\big)}.
\label{repeatneg}
\end{equation}
 Otherwise, the repeated root is positive provided that $L$ satisfies
\begin{equation}
L^2>\pi^2\frac{(\beta+\alpha)(d+1)(n^2+m^2)}{\gamma\big(\beta-\alpha-(\beta+\alpha)^3\big)}.
\label{repeatpos}
\end{equation}
Analysing (\ref{repeatneg}) and (\ref{repeatpos}) carefully, it can be observed that the only terms that can possibly invalidate the inequalities are in the denominator of the right hand-side, namely the expression $\beta-\alpha-(\beta+\alpha)^3$. Therefore, a restriction is required to be stated on this term to ensure that the area of $\Omega$ is not compared against a negative quantity, such a restriction is 
\begin{equation}
\beta > \alpha+(\beta+\alpha)^3.
\label{rest}
\end{equation}
It must be noted that (\ref{rest}) is the same restriction on the parameter choice obtained for the case of repeated real eigenvalues in the absence of diffusion. The region where the eigenvalues are real repeated roots are implicit curves in the parameter space satisfying (\ref{condreal}), these curves are computed numerically in the last part of Section 4. These curves form the boundary between the regions of complex and real eigenvalues. Varying the diffusion rate $d$ causes a shift to the location of the curves indicating clearly regions that are subject to diffusion-driven instability.
The remaining case to look at is when both eigenvalues are real distinct. This happens if $\alpha$ and $\beta$ are chosen such that the strict inequality case of (\ref{condreal}) is satisfied. This case corresponds to the diffusion-driven instability \textit{Turing type} only, because both eigenvalues are real and distinct. 
\subsection{Interpretation of the dynamics for the case of real eigenvalues}
If both eigenvalues are negative distinct real values, then the system is spatially as well as temporally stable, the dynamics will achieve no patterns, hence the system returns to the uniform constant steady state $(u_s,v_s)$ as time grows, [see Section 5 Figure \ref{fig14}] with no effect from diffusion. If the eigenvalues are both real with different signs, then the type of instability caused by diffusion is spatially periodic or oscillatory in space, because this case corresponds to the steady state becoming a saddle point. If both eigenvalues are positive real distinct values, then the dynamics are expected to exhibit a spatially periodic pattern, in the form of stripes or spots. 
\section{Main results}\label{main}
This section contains the main results of our current study. The analytical findings are formally presented in the form of two theorems with proofs. Each of which indicates the relationship between the domain size with the diffusion coefficient $d$ and reaction rate $\gamma$. The numerical technique for computing the critical boundary curves is also briefly explained and results of parameter space classification are numerically computed. Numerical verification of both theorems is carried out in this section, where the reaction-diffusion system is numerically solved using the finite element method on a unit square domain. The relationship between domain length and parameters $d$ and $\gamma$ is shown to be in agreement with theoretical predictions presented.

\begin{theorem}[Turing type diffusion-driven instability] Let $u$ and $v$ satisfy the non-dimensional reaction-diffusion system with {\it activator-depleted} reaction kinetics \eqref{B}-\eqref{zerofluxBCs} on a square domain $\Omega \subset \mathbb{R}^2$ with area $L^2$ and positive parameters $\gamma>0$, $d>0$, $\alpha>0$ and $\beta>0$.
 Given that the area of the square domain $\Omega \subset \mathbb{R}^2$ satisfies the inequality \eqref{cond}
 where $m,n \in \mathbb{N}$ then for all $\alpha, \beta \in \mathbb{R}_+$ in the neighbourhood of the unique steady state $(u_s,v_s)=\big(\alpha+\beta, \frac{\beta}{(\alpha+\beta)^2}\big)$ the diffusion driven instability is restricted to Turing type only, forbidding the existence of Hopf and transcritical bifurcation.
 \label{theorem1}
 \end{theorem}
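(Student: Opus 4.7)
The plan is to reduce the whole statement to the single quantitative fact that, under hypothesis \eqref{cond}, the trace $\mathcal{T}(\alpha,\beta)$ of the discrete stability matrix \eqref{vect2} is strictly negative for every $(\alpha,\beta)\in\mathbb{R}_+^2$ and every mode pair $(n,m)\neq(0,0)$. Once this is in hand, both of the excluded bifurcations fall out immediately: a Hopf bifurcation requires a complex-conjugate pair with zero (or crossing) real part, which forces $\mathcal{T}=0$ with $\mathcal{D}>0$; the transcritical bifurcation in the paper's sense (compare \eqref{trans}--\eqref{cond2}) also forces $\mathcal{T}=0$; and a Turing bifurcation only needs $\mathcal{D}(\alpha,\beta)<0$, which places no constraint on the sign of $\mathcal{T}$, so it remains admissible.

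To execute the reduction, I would first rewrite \eqref{cond} in the equivalent form $(d+1)\pi^2(n^2+m^2)/L^2 > \gamma$, which isolates the diffusive contribution to $\mathcal{T}(\alpha,\beta)$. Next I would invoke the bound promised by Theorem \ref{theorem2} (namely, $T(\alpha,\beta)/\gamma\le 1$ on $\mathbb{R}_+^2$), which in fact holds strictly on the open positive quadrant because
\[
\frac{\beta-\alpha-(\beta+\alpha)^3}{\beta+\alpha}=\frac{\beta-\alpha}{\beta+\alpha}-(\beta+\alpha)^2 < 1 - (\beta+\alpha)^2 < 1
\]
for every $\alpha,\beta>0$. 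Multiplying by $\gamma>0$ and then subtracting the diffusive term, the rewritten form of \eqref{cond} gives
\[
\mathcal{T}(\alpha,\beta)=\gamma\,\frac{\beta-\alpha-(\beta+\alpha)^3}{\beta+\alpha}-(d+1)\frac{(n^2+m^2)\pi^2}{L^2}<\gamma-\gamma=0,
\]
uniformly in $(\alpha,\beta)\in\mathbb{R}_+^2$ and in the admissible modes. The trivial mode $n=m=0$ recovers the no-diffusion system \eqref{C} already analysed in Section \ref{absenceone}, so it can be excluded without loss.

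Finally, I would translate the strict inequality $\mathcal{T}<0$ into the three bifurcation statements. For Hopf: if $\lambda_{1,2}$ is a complex-conjugate pair, $\mathrm{Re}(\lambda_{1,2})=\tfrac{1}{2}\mathcal{T}(\alpha,\beta)<0$, so the real part can never vanish or become positive, ruling out Hopf for every $(\alpha,\beta)\in\mathbb{R}_+^2$. For the transcritical case, equation \eqref{trans} requires exactly $\mathcal{T}(\alpha,\beta)=0$, which again contradicts the strict bound. For Turing: the condition $\mathcal{D}(\alpha,\beta)<0$ is independent of the sign of $\mathcal{T}$, so the existing classification of Turing regions is preserved. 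The only non-trivial analytic step is the \emph{strict} bound $T/\gamma<1$ on the open positive quadrant (as opposed to the weak bound $\le 1$), so this, together with a clean citation of Theorem \ref{theorem2}, is the main—though mild—obstacle in making the argument airtight.
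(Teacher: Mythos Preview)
Your argument is correct and reaches the same conclusion as the paper, but by a markedly shorter route. The paper proves $\mathcal{T}(\alpha,\beta)<0$ through a calculus-based analysis of the surface $\mathcal{T}$: it locates the unique stationary point at $(0,0)$ via the first-derivative test, classifies it as a saddle via the Hessian (evaluated through a polar change of variables and L'H\^opital's rule), and then checks monotonicity of the boundary curves $\mathcal{T}_0(\alpha)$ and $\mathcal{T}_0(\beta)$ to extract condition \eqref{cond}. You bypass all of this with the two-line algebraic bound
\[
\frac{\beta-\alpha-(\beta+\alpha)^3}{\beta+\alpha}=\frac{\beta-\alpha}{\beta+\alpha}-(\beta+\alpha)^2<1-(\beta+\alpha)^2<1,
\]
valid for $\alpha>0$, and then combine it with the rewritten hypothesis $(d+1)(n^2+m^2)\pi^2/L^2>\gamma$ to get $\mathcal{T}<0$ directly. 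This is in fact the same supremum argument the paper itself deploys in the proof of Theorem~\ref{theorem2}, so you are effectively importing that cleaner technique back to Theorem~\ref{theorem1}. What the paper's approach buys is a more detailed picture of the shape of $\mathcal{T}$ (its saddle structure at the origin and monotonic decay along the axes), which is of some independent interest for the parameter-space classification; what your approach buys is brevity and the avoidance of the somewhat delicate L'H\^opital computation in polar coordinates. One stylistic point: you cite Theorem~\ref{theorem2} for the bound $T/\gamma\le 1$, which is a forward reference, but since you then supply the strict inequality from scratch there is no logical circularity---it would be cleaner simply to drop the citation and present the bound as self-contained.
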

 \begin{proof}
 The strategy of this proof is through detailed analysis of the real part of the eigenvalues of the linearised system, when the eigenvalues are a complex conjugate pair. This can be done through studying the surface $\mathcal{T}(\alpha,\beta)$, and finding that it has a unique extremum point at $(0,0)$.  The method of second derivative test and Hessian matrix is used to determine the type of this extremum. Upon finding its type, then the monotonicity of $\mathcal{T}(\alpha,\beta)$ is analysed in the neighbourhood of the extremum point in both directions $\alpha$ and $\beta$. The monotonicity analysis and the type of the extremum leads to proving the claim of the theorem.
 
The eigenvalues $\lambda_{1,2}$ in the presence of diffusion, in terms of trace $\mathcal{T}(\alpha,\beta)$ and determinant $\mathcal{D}(\alpha,\beta)$ are given by
 $\lambda_{1,2} = \frac{1}{2}\mathcal{T}(\alpha,\beta) \pm \frac{1}{2}\sqrt{\mathcal{T}^2(\alpha,\beta) -4\mathcal{D}(\alpha,\beta)}$,
 where 
 \[\begin{split}
\mathcal{T}(\alpha,\beta)=& \gamma \frac{\beta - \alpha  - (\beta +\alpha)^3}{\beta +\alpha}-(d+1)\frac{(n^2+m^2) \pi^2}{L^2},\\
 \mathcal{D}(\alpha,\beta) =&\Big(\gamma \frac{\beta-\alpha}{\beta+\alpha}-\frac{(n^2+m^2) \pi^2}{L^2}\Big)\Big(-\gamma(\beta +\alpha)^2-d\frac{(n^2+m^2) \pi^2}{L^2}\Big)+2\gamma^2\beta(\beta+\alpha).
 \end{split}\]
 It can be immediately observed that in the neighbourhood of $(u_s,v_s)$ for the system to exhibit Hopf or transcritical bifurcation the discriminant of the characteristic polynomial must satisfy the inequality
 $\mathcal{T}^2(\alpha,\beta) -4\mathcal{D}(\alpha,\beta)<0$.
 Therefore, the stability and type of the steady state $(u_s,v_s)$ in this case is determined by the sign of the real part of $\lambda_{1,2}$. The aim is to investigate $\mathcal{T}(\alpha,\beta)$ and derive from it condition (\ref{cond}) on $L^2$ as a requirement for $\mathcal{T}(\alpha,\beta)$ to be negative for all strictly positive choices of $\gamma$, $\alpha$, $\beta$ and $d>0$.
 First derivative test is used on $\mathcal{T}(\alpha,\beta)$ to find all stationary points of $\mathcal{T}(\alpha,\beta)$ on the domain $[0,\infty) \times [0,\infty)$.
 All stationary points of $\mathcal{T}(\alpha,\beta)$ must satisfy
 $\frac{\partial \mathcal{T}}{\partial \alpha} = - \gamma\frac{2(\alpha+\beta)^3+2\beta}{(\alpha+\beta)^2} = 0$,
 which is true if and only if 
 \begin{equation}
  (\alpha +\beta)^3+\beta = 0.
  \label{first}
 \end{equation}
 Similarly all stationary points of $\mathcal{T}(\alpha,\beta)$ must also satisfy
 $\frac{\partial \mathcal{T}}{\partial \beta} = - \gamma \frac{2(\alpha+\beta)^3-2\alpha}{(\alpha+\beta)^2} = 0$,
 which implies 
  \begin{equation}
  (\alpha +\beta)^3-\alpha = 0.
  \label{second}
 \end{equation}
 The system of nonlinear algebraic equations obtained from (\ref{first}) and (\ref{second}) has a unique solution namely $\alpha=0$ and $\beta = 0$ [see Remark \ref{remarkone}].
 Therefore, $\mathcal{T}(\alpha,\beta)$ has a unique stationary point at the origin. The type of this stationary point is determined by the second derivative test for which the Hessian matrix $H(\mathcal{T}(\alpha,\beta))$ must be computed and evaluated at the point $(0,0)$.
 
 \begin{equation}
 \begin{split}
H(\mathcal{T}(\alpha,\beta))|_{(0,0)} =\left[\begin{array}{cc}
   \frac{\partial ^2 \mathcal{T}}{\partial \alpha^2}  &  \frac{\partial ^2 \mathcal{T}}{\partial \beta\partial \alpha } \\
      \frac{\partial ^2 \mathcal{T}}{\partial \alpha \partial \beta}&\frac{\partial ^2 \mathcal{T}}{\partial \beta^2}
\end{array}\right]_{(0,0)} = \left[\begin{array}{cc}
   -\gamma \frac{4\beta-2(\alpha+\beta)^3}{(\alpha+\beta)^3}  &  -\gamma \frac{2(\alpha+\beta)^3+2(\alpha-\beta)}{(\alpha+\beta)^3} \\
      -\gamma\frac{2(\alpha+\beta)^3+2(\alpha-\beta)}{(\alpha+\beta)^3}&-\gamma\frac{2(\alpha+\beta)^3+4\alpha}{(\alpha+\beta)^3}
\end{array}\right]_{(0,0)}.
\notag
\end{split}
\end{equation}
It is clear that the entries of $H$ upon direct evaluation at the point $(0,0)$ are undefined. This is treated by using L'Hopital's rule.  L'Hopital's rule sometimes does not work for functions of two variables defined on cartesian coordinates, therefore a transformation of the entries to polar coordinates might be applied. We will exploit this technique to express the Hessian matrix in polar coordinates and differentiate accordingly. 
The entries of $H$ are transformed to polar coordinates using $\alpha = r\cos(\theta)$ and $\beta = r\sin(\theta)$, so the rule can be applied by taking the $\lim_{r\rightarrow 0} H$.
Using $(r,\theta)$ coordinates the entries of $H$ take the following form
\begin{equation}
H(\mathcal{T}(r,\theta))|_{r = 0}=-\gamma \left[\begin{array}{cc}
   \frac{4r\sin\theta-2r^3(\cos\theta+\sin\theta)^3}{r^3(\cos\theta+\sin\theta)^3}  &  \frac{2r^3(\cos\theta+\sin\theta)^3+2r(\cos\theta-\sin\theta)}{r^3(\cos\theta+\sin\theta)^3}\\
      \frac{2r^3(\cos\theta+\sin\theta)^3+2r(\cos\theta-\sin\theta)}{r^3(\cos\theta+\sin\theta)^3}&\frac{4r\cos\theta+2r^3(\cos\theta+\sin\theta)^3}{r^3(\cos\theta+\sin\theta)^3} 
\end{array}\right]_{r=0}.
\label{hess1}
\end{equation}
L'Hopital's rule is applied to each entry of $H$ separately and the $\lim_{r \rightarrow 0} H_{ij}(\mathcal{T}(r,\theta))$ is computed for $i,j=1,2$. Starting with the entry $H_{11}$ and cancelling $r$, the expression takes the form
 \[
 \begin{split}
 \lim_{r \rightarrow 0} H_{11} =\lim_{r \rightarrow 0}\frac{4\sin\theta-2r^2(\cos\theta+\sin\theta)^3}{r^2(\cos\theta+\sin\theta)^3}.
 \end{split}
 \]
 Let $\mathcal{T}_1(r,\theta)$ and $\mathcal{T}_2(r,\theta)$ respectively denote the numerator and the denominator of the expression for $H_{11}$, then the application of L'Hopital's rule suggests that 
 \[\begin{split}
 \lim_{r \rightarrow 0} H_{11}(\mathcal{T}(r,\theta)) &= \lim_{r \rightarrow 0} \frac{\mathcal{T}_1(r,\theta)}{\mathcal{T}_2(r,\theta)}= \frac{\lim_{r \rightarrow 0} \frac{d\mathcal{T}_1}{dr}(r,\theta)}{\lim_{r \rightarrow 0} \frac{d\mathcal{T}_2}{dr}(r,\theta)}= \lim_{r \rightarrow 0}\frac{ -4r(\cos \theta + \sin \theta)^3}{2 r (\cos \theta +\sin \theta)^3}= -2.
 \end{split}\]
 Applying the same procedure for $H_{12}$, $H_{21}$ and $H_{22}$, all the entries of $H$ are computed and given by
 \begin{equation}
H(\mathcal{T}(\alpha,\beta))|_{(0,0)}=-\gamma\left[\begin{array}{cc}
   -2  &  2 \\
  2  &  2
\end{array}\right].
\label{hessnum}
\end{equation}
Since the $\text{det}(H) = -8 \gamma^2< 0$, therefore, the second derivative test suggests that $(0,0)$ is a saddle point of $\mathcal{T}(\alpha,\beta)$. Since it was previously shown that $\mathcal{T}(\alpha,\beta)$ attains a unique stationary point in the domain $[0,\infty)\times[0,\infty)$, i.e. by solving the equations (\ref{first}) and (\ref{second}), therefore, if $(0,0)$ was a maximum and $\mathcal{T}(0,0)<0$, this would suggest that, whenever $\lambda_{1,2}$ has a non-zero imaginary part then $Re(\lambda_{1,2})<0$ regardless of the choice of $d$, $\gamma$ and $L^2$, however due to fact that $(0,0)$ is a saddle point, it remains to show that $\mathcal{T}(\alpha,\beta)$ is negative at $(0,0)$ and its first derivatives in the neighbourhood of $(0,0)$ of $\mathcal{T}(\alpha,0)$ and $\mathcal{T}(0,\beta)$ for positive values of $\alpha$ and $\beta$ in both directions are negative and do not change sign. Let $\mathcal{T}_0(\alpha)$ and $\mathcal{T}_0(\beta)$ denote the curves for constants $\beta=0$ and $\alpha=0$ respectively on the plane $\mathcal{T}(\alpha,\beta)$, then 
\[\begin{split}
\mathcal{T}_0(\alpha) = \lim_{\beta \rightarrow 0}\mathcal{T}(\alpha,\beta) = -\gamma(1+\alpha^2)-(d+1)\frac{(m^2+n^2)\pi^2}{L^2},\\
\mathcal{T}_0(\beta) = \lim_{\alpha \rightarrow 0}\mathcal{T}(\alpha,\beta) = \gamma(1-\beta^2)-(d+1)\frac{(m^2+n^2)\pi^2}{L^2}.
\end{split}\]
The expression for $\mathcal{T}_0(\alpha)$ clearly satisfy that it is negative at $\alpha=0$ and its first derivative in the direction of $\alpha$ is
$ \frac{d\mathcal{T}_0(\alpha)}{d\alpha} = -2\gamma\alpha <0$ for all  $\gamma, \alpha \in [0,\infty)$.
The expression for $\mathcal{T}_0(\beta)$ however is not trivially negative for all values, since the sign of the constant $\gamma$ in the expression is positive, which if computed at $\beta=0$, leads to the desired condition
\[
\mathcal{T}_0(\beta)\big|_{\beta=0} = \gamma-(d+1)\frac{(m^2+n^2)\pi^2}{L^2}<0 \qquad \iff \qquad L^2<\pi^2\frac{(d+1)(m^2+n^2)}{\gamma}.
\]
It has been shown that the condition (\ref{cond}) is necessary for $\mathcal{T}(\alpha,\beta)$ to be negative at the unique stationary point namely $(0,0)$, it remains to show that the first derivative of $\frac{d\mathcal{T}_0}{d\beta}(\beta)<0$,
$\frac{d\mathcal{T}_0}{d\beta} = -2\gamma \beta <0$  for all $ \gamma \beta \in [0,\infty)$
which completes the proof.
 \end{proof}
 
\subsubsection{Remark}
\label{remarkone}
  The parameters $(\alpha,\beta)=(0,0)$ are not the admissible choices for the original system, because this choice of parameters leads to the trivial steady state $(u_s,v_s)=(0,0)$, which is globally and unconditionally stable and has no physical relevance. However, the surface defined by $\mathcal{T}(\alpha,\beta)$, which determines the sign of the real part of $\lambda_{1,2}\in \mathbb{C}\backslash \mathbb{R}$ attains a unique extremum at $(0,0)$, which makes the analysis of the neighbourhood of this point important for studying the sign of $\mathcal{T}(\alpha,\beta)$. It is reasonable to argue that the point $(\alpha,\beta)=(0,0)$ is not permissible to use as the platform of the proof of Theorem \ref{theorem1}, in that case, proving the results become a step closer by only showing that $\mathcal{T}(\alpha,\beta)$ is a strictly monotonically decreasing function (with no extrema in $(\alpha,\beta)\in \mathbb{R}^2_+$) and it can only attain bounded positive values in the neighbourhood of $(0,0)$ if and only if the condition (\ref{cond}) on the domain size $L^2$ is violated. Therefore, given that $L^2$ maintains to satisfy (\ref{cond}), the sign of the real part i.e. $\mathcal{T}(\alpha,\beta)$ of $\lambda_{1,2}\in \mathbb{C}\backslash \mathbb{R}$ is guaranteed to be negative, which is a step shorter to reach the claim of Theorem \ref{theorem1}. Therefore, it is brought to the attention of the reader that the use of the point $(\alpha,\beta)=(0,0)$ as an extremum of $\mathcal{T}(\alpha,\beta)$ is more of complementary factor to the proof rather than an essential one.   

\begin{theorem}[Hopf or transcritical bifurcation]
 Let $u$ and $v$ satisfy the non-dimensional reaction-diffusion system with {\it activator-depleted} reaction kinetics \eqref{B}-\eqref{zerofluxBCs} on a square domain $\Omega \subset \mathbb{R}^2$ with length $L$ and positive parameters $\gamma$, $d>0$, $\alpha>0$ and $\beta>0$.
 For the system to exhibit Hopf or transcritical bifurcation in the neighbourhood of the unique steady state $(u_s,v_s)=\big(\alpha+\beta, \frac{\beta}{(\alpha+\beta)^2}\big)$, the necessary condition on the area $L^2$ of the square domain $\Omega \subset \mathbb{R}^2$ is \eqref{cond1}
 where $m$ and $n$ are any integers.
 \label{theorem2}
 \end{theorem}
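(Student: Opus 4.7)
The plan is to derive the necessary condition by contraposition of Theorem \ref{theorem1}, combined with a careful characterisation of when the real part $\text{Re}(\lambda_{1,2}) = \tfrac{1}{2}\mathcal{T}(\alpha,\beta)$ of the eigenvalues can be non-negative on the open quadrant $\mathbb{R}_+^2$. A Hopf bifurcation at the steady state $(u_s,v_s)$ requires $\lambda_{1,2}$ to form a pair of purely imaginary conjugates, i.e. $\mathcal{T}(\alpha,\beta)=0$ with $\mathcal{T}^2 - 4\mathcal{D} < 0$; the transcritical criterion already derived in equation \eqref{trans} likewise demands $\mathcal{T}(\alpha,\beta)=0$ for some admissible $(\alpha,\beta)$. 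Hence the occurrence of either bifurcation forces $\mathcal{T}$ to attain at least one non-negative value somewhere on $\mathbb{R}_+^2$.

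The first step would be to import the structural analysis of $\mathcal{T}$ already established inside the proof of Theorem \ref{theorem1}: the surface has the origin as the unique critical point on the closed quadrant (a saddle, by the Hessian computation); the partial derivative $\partial \mathcal{T}/\partial \alpha = -\gamma\,\frac{2(\alpha+\beta)^{3}+2\beta}{(\alpha+\beta)^{2}}$ is strictly negative throughout the open quadrant; and the axial restrictions $\mathcal{T}_0(\alpha)$ and $\mathcal{T}_0(\beta)$ are strictly decreasing on $[0,\infty)$. Passing to polar coordinates as in the L'Hopital argument of Theorem \ref{theorem1}, the radial limit of $\mathcal{T}$ at the origin along angle $\theta$ equals
\begin{equation*}
\gamma\,\frac{\sin\theta-\cos\theta}{\sin\theta+\cos\theta}\;-\;(d+1)\frac{(m^2+n^2)\pi^2}{L^2},
\end{equation*}
which is strictly increasing in $\theta$ on $[0,\pi/2]$ and thus maximal along the $\beta$-axis. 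Combining these facts, the supremum of $\mathcal{T}$ over the open quadrant is
\begin{equation*}
\sup_{(\alpha,\beta)\in\mathbb{R}_+^2} \mathcal{T}(\alpha,\beta) \;=\; \gamma - (d+1)\frac{(m^2+n^2)\pi^2}{L^2},
\end{equation*}
approached but never attained as $(\alpha,\beta)\to(0,0)^+$ along the $\beta$-axis.

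With this supremum formula in hand, I would close the argument by contraposing the claim. Suppose $L^2 \leq \pi^2(d+1)(m^2+n^2)/\gamma$. For any interior point $(\alpha,\beta)\in\mathbb{R}_+^2$, strict negativity of $\partial \mathcal{T}/\partial \alpha$ gives $\mathcal{T}(\alpha,\beta) < \mathcal{T}_0(\beta)$, and strict monotonicity of $\mathcal{T}_0(\beta)$ for $\beta>0$ gives $\mathcal{T}_0(\beta) < \mathcal{T}_0(0) = \gamma - (d+1)(m^2+n^2)\pi^2/L^2 \leq 0$. Hence $\mathcal{T}(\alpha,\beta) < 0$ strictly at every admissible point, which by the opening characterisation rules out both Hopf and transcritical bifurcations. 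Contraposing yields the necessary condition \eqref{cond1}.

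The principal obstacle I anticipate is cleanly handling the boundary case $L^2 = \pi^2(d+1)(m^2+n^2)/\gamma$, where the supremum equals exactly zero and could superficially suggest that the critical value $\mathcal{T}=0$ is realised. The resolution, built into the argument above, is that this value is approached only in the limit along the $\beta$-axis at the origin, a point excluded on physical grounds by Remark \ref{remarkone}, while every genuine interior parameter choice still gives $\mathcal{T}(\alpha,\beta) < 0$ strictly. This refinement upgrades the necessary condition to the strict inequality appearing in \eqref{cond1} rather than a weak one, completing the proof.
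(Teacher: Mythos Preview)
Your proposal is correct and reaches the same conclusion as the paper, but the route differs. The paper establishes the key inequality directly and elementarily: it rewrites the variable part of the trace as $\gamma\big(f_1(\alpha,\beta)-f_2(\alpha,\beta)\big)$ with $f_1=\beta/(\beta+\alpha)$ and $f_2=\big(\alpha+(\beta+\alpha)^3\big)/(\beta+\alpha)$, observes $\sup f_1=1$ and $\inf f_2=0$, and concludes $\sup_{\alpha,\beta>0}\gamma(f_1-f_2)=\gamma$, from which condition \eqref{cond1} drops out in one line. You instead recycle the differential machinery from the proof of Theorem \ref{theorem1} (strict negativity of $\partial\mathcal{T}/\partial\alpha$, monotonicity of the axial restriction $\mathcal{T}_0(\beta)$, the saddle at the origin) to pin down the same supremum, and then argue by contraposition. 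Your version is longer and leans on prior results, but it buys a cleaner treatment of the borderline case $L^2=\pi^2(d+1)(m^2+n^2)/\gamma$: you make explicit that the supremum is approached only as $(\alpha,\beta)\to(0,0)$ along the $\beta$-axis and is therefore never attained on the admissible open quadrant, which justifies the \emph{strict} inequality in \eqref{cond1}. The paper's decomposition argument is more self-contained and economical, but is somewhat informal at the step ``the supremum of their difference is determined by the supremum of the function with positive sign''; your monotonicity chain $\mathcal{T}(\alpha,\beta)<\mathcal{T}_0(\beta)<\mathcal{T}_0(0)$ makes that passage rigorous.
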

 \begin{proof}
The strategy to this proof is some what different from that of Theorem \ref{theorem1}, because in Theorem \ref{theorem1} the aim was to show that condition (\ref{cond}) must be satisfied by $L^2$ for $\mathcal{T}(\alpha,\beta)$ to be negative and to maintain negative sign throughout the $(\alpha,\beta)$ plane. For this proof the only important step is to derive the condition for $\mathcal{T}(\alpha,\beta)$ to be positive, because when $\lambda_{1,2}$ are complex eigenvalues then the sign of $\mathcal{T}(\alpha,\beta)$ is precisely what determines the stability of $(u_s,v_s)$. The system undergoes Hopf or transcritical bifurcation in the neighbourhood of $(u_s,v_s)$, if the sign $\mathcal{T}(\alpha,\beta)$ is positive. Therefore for diffusion-driven instability to influence $(u_s,v_s)$ when $\lambda_{1,2}$ are complex conjugate pair it is necessary that
\begin{equation}
\mathcal{T}(\alpha,\beta) =\gamma\frac{\beta - \alpha  - (\beta +\alpha)^3}{\beta +\alpha}-(d+1)\frac{(n^2+m^2) \pi^2}{L^2}\geq0,
\label{statement}
\end{equation}
which can equivalently be written as
\begin{equation}
\gamma\frac{\beta - \alpha  - (\beta +\alpha)^3}{\beta +\alpha}\geq(d+1)\frac{(n^2+m^2) \pi^2}{L^2}.
\label{t2}
\end{equation}
The expression on the left hand-side of (\ref{t2}) is explored in particular to find its upper bound, it can be written it in the form of difference between two rational functions as 
\begin{equation}\begin{split}
\gamma\frac{\beta - \alpha  - (\beta +\alpha)^3}{\beta +\alpha} = \gamma \Big(f_1(\alpha,\beta)-f_2(\alpha,\beta)\Big),
\label{uppbound}
\end{split}
\end{equation}
where $f_1(\alpha,\beta) = \frac{\beta}{\beta+\alpha}$  and $f_2(\alpha,\beta) = \frac{\alpha+(\beta+\alpha)^3}{\beta+\alpha}$. 
The range for $f_1(\alpha,\beta)$ and $f_2(\alpha,\beta)$ are independently analysed to find the supremum of the expression on the left of (\ref{uppbound}). Starting with $f_1(\alpha,\beta)$, which is bounded below and above in the domain $(\alpha,\beta)\in[0,\infty)\times [0,\infty)$, we have
$\sup_{\alpha,\beta \in \mathbb{R}_+} f_1(\alpha,\beta) = 1$, and 
the $\inf_{\alpha,\beta \in \mathbb{R}_+} f_1(\alpha,\beta) = 0$ for all $ \alpha,\beta \in \mathbb{R}_+.$
Similarly considering the expression for $f_2(\alpha,\beta)$, we have
$\sup_{\alpha,\beta \in \mathbb{R}_+} f_2(\alpha,\beta) = \infty, $ and 
the $\inf_{\alpha,\beta \in \mathbb{R}_+} f_2(\alpha,\beta) = 0$, for all $\alpha, \beta \in \mathbb{R}_+.$
Since the ranges of both $f_1(\alpha,\beta)$ and $f_2(\alpha,\beta)$ are non-negative within their respective domains, therefore the supremum of their difference is determined by the supremum of the function with positive sign, which is $\sup_{\alpha,\beta\in\mathbb{R}_+}f_1(\alpha,\beta)=1$. Therefore, inequality (\ref{statement}) can be written as
\[\begin{split}
(d+1)\frac{(n^2+m^2) \pi^2}{L^2}&\leq\gamma\frac{\beta - \alpha  - (\beta +\alpha)^3}{\beta +\alpha} \\
& \leq \gamma \sup_{\alpha,\beta\in \mathbb{R}_+} \big(f_1(\alpha,\beta)-f_2(\alpha,\beta)\big) = \gamma \sup_{\alpha,\beta \in \mathbb{R}_+} f_1(\alpha,\beta)  = \gamma,
\end{split}\]
which by rearranging leads to the desired condition \eqref{cond1}.
\end{proof}

\subsection{Equations of the implicit partitioning curves}
A similar method to that used in the case of the absence of diffusion is applied to solve for the partitioning curves in  the parameter plane. The partitioning curve for the complex region, on the plane $(\alpha,\beta)\in\mathbb{R}_+$ must satisfy
\begin{equation}
  \mathcal{T}^2(\alpha,\beta)-4\mathcal{D}(\alpha,\beta) =0,
  \label{crit}
\end{equation}
since this is the curve on which the discriminant of the expression for eigenvalues change sign. It means the curve satisfying (\ref{crit}) determines the boundary on one side of which $\lambda_{1,2}$ are both real and on the other side of it, $\lambda_{1,2}$ are both complex conjugate eigenvalues.
Equation (\ref{crit}) is solved by a similar method to that used for solving (\ref{imp}) on a square domain and for each fixed mesh point in the direction of $\alpha$ it is found that its solution is equivalent to finding the positive real roots of the polynomial of degree 6 in $\beta$ for fixed values of $\alpha_i$ in the form
\begin{equation}
\begin{split}
\psi(\alpha_i,\beta) &= C_0(\alpha_i)+C_1(\alpha_i)\beta+C_2(\alpha_i)\beta^2+C_3(\alpha_i)\beta^3\\&+C_4(\alpha_i)\beta^4+C_5(\alpha_i)\beta^5+C_6(\alpha_i)\beta^6,
\label{star2}
\end{split}
\end{equation}
where for each $\alpha_i$ the coefficients are given by
\begin{equation*}
\begin{split}
C_0(\alpha_i) =& {L}^{4}\gamma^2{\alpha_i}^{6}+\big(4\,{L}^{4}{\pi }^{2}{m}^{2}+4\,{L}^{4}
{\pi }^{2}{n}^{2}+2\,{L}^{2}\gamma{\pi }^{2}d{m}^{2}+2\,{L}^{2}\gamma{\pi }^{2}d{n}^{2}+4\,{L}^{6}\\
&+2\,{L}^{2}\gamma{\pi }^{2}{m}^{2}+2\,{L}^{2}\gamma{\pi }^{2}{n}^{2}+2\,{L}^{4}\gamma^2\big){\alpha_i}^4
+\big(-4\,{L}^{2}{\pi }^{4}d{m}^{4}
-8\,{L}^{2}{\pi }^{4}d{m}^{2}{n}^{2}\\&-4\,{L}^{2}{\pi }^{4}d{n
}^{4}-4\,{L}^{2}{\pi }^{4}{m}^{4}-8\,{L}^{2}{
\pi }^{4}{m}^{2}{n}^{2}-4\,{L}^{2}{\pi }^{4}{n}^{4}-4\,{L}^{4}{\pi }^{2}d{m}^{2}\\
&-4\,{L}^{4}{\pi }^{2}d{n
}^{2}-4\,{L}^{4}{\pi }^{2}{m}^{2}-4\,{L}^{4}{
\pi }^{2}{n}^{2}\big){\alpha_i}^{3}
 +\big({\pi }^{4}{d}^{2}{m}^{4}+2\,{
\pi }^{4}{d}^{2}{m}^{2}{n}^{2}\\&+{\pi }^{4}{d}^{2}{n}^{4}+2\,{\pi }^{4}d{m}^{4}+4\,{\pi }^{4}d{m}^{2}{n}
^{2}+2\,{\pi }^{4}d{n}^{4}+{\pi }^{4}{m}^{4}
+2\,{\pi }^{4}{m}^{2}{n}^{2}
\\&+{\pi }^{4}{n}^{4}
+2\,{L}^{2}\gamma{\pi }^{2}d{m}^{2}+2\,{L}^{2}\gamma{\pi }^
{2}d{n}^{2}+2\,{L}^{2}\gamma{\pi }^{2}{m}^{2}+2\,{L}
^{2}\gamma{\pi }^{2}{n}^{2}+{L}^{4}\gamma^2\big){\alpha_i}^{2},
\end{split}
\end{equation*}
\begin{equation*}
\begin{split}
C_1(\alpha_i) =& 6\,{L}^{4}\gamma^2{\alpha_i}^{5}-8\,{L}^{4}{\alpha_i}^{4}+\big(8\,{L}^{4}{\pi }^{2}{m}^{2}+8\,{L}^{4}{\pi }^{2}{n}^{2}+8\,{L}^{2}\gamma{\pi }^{2}d{m}^{2}+8\,{L}^{2}\gamma{\pi }^{2}d{n}^{2}\\&
+8\,{L}^{2}\gamma{\pi }^{2}{m}^{2}+8\,{L}^{2}\gamma{\pi }^{2}{n}^{2}
+4\,{L}^{4}\gamma^2\big){\alpha_i}^{3}
+\big(-12\,{L}^{2}{\pi }^{4}d{m}^{4}-24\,{L}^{2}{\pi }^{4}d{m}^{2}{n}^{2}\\&-12\,{L}^{
2}{\pi }^{4}d{n}^{4}
-12\,{L}^{2}{\pi }^{4}{m}^{4}-24\,{L}^{2}{\pi }^{4}{m}^{2}{n}^{2}
-12\,{L}^{2}{\pi }^
{4}{n}^{4}-4\,{L}^{4}{\pi }^{2}d{m}^{2}\\&-4\,{L}
^{4}{\pi }^{2}d{n}^{2}-4\,{L}^{4}{\pi }^{2}{m}^{2}-4\,{L}^{4}{\pi }^{2}{n}^{2}\big){\alpha_i}^{2}
+\big(2\,{\pi }^{4}{d}^{2}{m}^{
4}+4\,{\pi }^{4}{d}^{2}{m}^{2}{n}^{2}
\\&+2\,{\pi }^{4}{d}^{2}
{n}^{4}+4\,{\pi }^{4}d{m}^{4}+8\,{\pi }^{4}d{m}^{2}{n}^{2}
+4\,{\pi }^{4}d{n}^{4}+2\,{\pi }^{4}{m}^{4}+4\,{\pi 
}^{4}{m}^{2}{n}^{2}\\&+2\,{\pi }^{4}{n}^{4}-2\,{L}^{4}\gamma^2\big)\alpha_i,
\end{split}
\end{equation*}
\begin{equation*}
\begin{split}
C_2(\alpha_i)=&15\,{L}^{4}\gamma^2{\alpha_i}^{4}-32\,{L}^{4}{\alpha_i}^{3}+\big(12\,{L}^{2}\gamma{\pi }^{2}d{m}^{2}+12\,{L}^{2}\gamma{\pi }^{2}d{n}^{2}-8\,{L}^{
6}+12\,{L}^{2}\gamma{\pi }^{2}{m}^{2}\\&+12\,{L}^{2}\gamma{
\pi }^{2}{n}^{2}\big){\alpha_i}^{2}+\big(-12\,{L}^{2}{\pi }^{4}d{m}^{4}-24\,{
L}^{2}{\pi }^{4}d{m}^{2}{n}^{2}-12\,{L}^{2}{\pi }^{4}d{n}^{4}
\\&-12\,{L}^{2}{\pi }^{4}{m}^{4}-24\,{L}^{2}{\pi }^{4}{m}^{2}
{n}^{2}-12\,{L}^{2}{\pi }^{4}{n}^{4}+4\,{L}^{4}{\pi }^{2}d
{m}^{2}+4\,{L}^{4}{\pi }^{2}d{n}^{2}\\ &+4\,{L}^{4}{\pi }^{2}{
m}^{2}+4\,{L}^{4}{\pi }^{2}{n}^{2}\big)\alpha_i+{\pi }^{4}{d}^{2}{m}^{4
}+2\,{\pi }^{4}{d}^{2}{m}^{2}{n}^{2}+{\pi }^{4}{d}^{2}{n}^{4}+2\,{\pi 
}^{4}d{m}^{4}\\&+4\,{\pi }^{4}d{m}^{2}{n}^{2}+2\,{\pi }^{4}d{n}^{4}+{\pi 
}^{4}{m}^{4}+2\,{\pi }^{4}{n}^{2}{m}^{2}+{\pi }^{4}{n}^{4}-2\,{L}^{2}\gamma{
\pi }^{2}d{m}^{2}\\ &-2\,{L}^{2}\gamma{\pi }^{2}d{n}^{2}-2\,{L}^{2}\gamma{\pi }^{2}{m}
^{2}-2\,{L}^{2}\gamma^2{\pi }^{2}{n}^{2}+{L}^{4}\gamma^2,
\end{split}
\end{equation*}
\begin{equation*}
\begin{split}
C_3(\alpha_i) =&20\,{L}^{4}\gamma^2{\alpha_i}^{3}-48\,{L}^{4}{\alpha_i}^{2}+\big(-8\,{L}^{4}{\pi }^{2}{m
}^{2}-8\,{L}^{4}{\pi }^{2}{n}^{2}+8\,{L}^{2}\gamma{\pi }^{2}d{m}
^{2}\\&+8\,{L}^{2}\gamma{\pi }^{2}d{n}^{2}+8\,{L}^{2}\gamma{\pi }^{2}{m}^
{2}+8\,{L}^{2}\gamma{\pi }^{2}{n}^{2}-4\,{L}^{4}\gamma\big)\alpha_i-4\,{L}^{2
}{\pi }^{4}d{m}^{4}\\&-8\,{L}^{2}{\pi }^{4}d{m}^{2}{n}^{2}-4\,{L}^{2}{
\pi }^{4}d{n}^{4}-4\,{L}^{2}{\pi }^{4}{m}^{4}-8\,{L}^{2}{\pi }^{4}{m}^
{2}{n}^{2}-4\,{L}^{2}{\pi }^{4}{n}^{4}\\&+4\,{L}^{4}{\pi }^{2}d{m}^{2}+4
\,{L}^{4}{\pi }^{2}d{n}^{2}+4\,{L}^{4}{\pi }^{2}{m}^{2}+4\,{L}^{4}{
\pi }^{2}{n}^{2},
\end{split}
\end{equation*}
\begin{equation*}
\begin{split}
C_4(\alpha_i)=&15\,{L}^{4}\gamma^2{\alpha_i}^{2}-32\,{L}^{4}\alpha_i-4\,{L}^{4}{\pi }^{2}{m}^{2}-
4\,{L}^{4}{\pi }^{2}{n}^{2}+2\,{L}^{2}\gamma{\pi }^{2}d{m}^{2}\\&+2\,{L}^{2}\gamma{
\pi }^{2}d{n}^{2}+4\,{L}^{6}+2\,{L}^{2}\gamma{\pi }^{2}{m}^{2}+2\,{L}^{2}\gamma{
\pi }^{2}{n}^{2}-2\,{L}^{4}\gamma^2,
\end{split}
\end{equation*}
\begin{equation*}
\begin{split}
C_5(\alpha_i)=6\,{L}^{4}\gamma^2\alpha_i-8\,{L}^{4}, \quad \text{and} \quad C_6(\alpha_i)=&\gamma^2L^4.
\end{split}
\end{equation*}
The implicit curve satisfying (\ref{star2}) indicates the choices of $(\alpha,\beta)$ for which the eigenvalues are repeated real roots, since this is the curve on which the discriminant is zero.  Therefore it forms the boundary region in the plane corresponding to complex eigenvalues. The polynomial $\psi_i(\beta)$  of degree 6, given by (\ref{star2}) is solved for $5000$ fixed $\alpha_i$ on a domain with $\alpha_{max}=\beta_{max}=3$. For each $\alpha_i$ the positive real roots of $\psi_i(\beta)$ are extracted and plotted on the $(\alpha,\beta)\in \mathbb{R}_+$ plane. The algorithm is run for five different values of the non-dimensional parameter $d$, associated to both cases, where the area  $L^2$ of $\Omega$ satisfies condition (\ref{cond}) of Theorem \ref{theorem1} as well as condition (\ref{cond1}) of Theorem \ref{theorem2}. 

The equation for the second curve that partitions the region corresponding to complex eigenvalues of the parameter plane is the curve on which the real parts of the complex roots are zero, hence satisfying the equation
\begin{equation}
\mathcal{T}(\alpha,\beta) =\gamma \frac{\beta-\alpha-(\alpha+\beta)^3}{\beta+\alpha}-\frac{(d+1)(m^2+n^2)\pi^2}{L^2}=0.
\label{realzero}
\end{equation}
The solution of (\ref{realzero}) by algebraic manipulation can be found to be equivalent to finding the positive real roots of the cubic polynomial $\phi(\alpha_i,\beta)=0$ for fixed $\alpha_i$, where $\phi$ is given by
 \begin{equation}
 \phi(\alpha_i,\beta) = C_0(\alpha_i)+C_1(\alpha_i)\beta+C_2(\alpha_i)\beta^2+C_3(\alpha_i)\beta^3,
 \label{impcen}
 \end{equation}
 with  $C_0(\alpha_i) = -\big(L^2\gamma+(d+1)(n^2+m^2)\pi^2\big)-L^2\gamma\alpha_i^3,$ 
$ C_1(\alpha_i) = L^2\gamma-(d+1)(n^2+m^2)\pi^2-3L^2\gamma\alpha_i^2,$
 $C_2(\alpha_i) = -3L^2\gamma\alpha_i,$ and 
 $C_3(\alpha_i) = -L^2\gamma$.
 
Variations of the parameter $d$ are investigated, whilst the area $L^2$ remains to satisfy condition (\ref{cond}) and as expected it is found that in the region of the parameter plane corresponding to complex eigenvalues, there is no sub-region that corresponds to $\lambda_{1,2}$ to have positive real parts. For each value of $d$ the region corresponding to complex eigenvalues is tested by looking for a critical curve on which $\lambda_{1,2}$ is purely imaginary, i.e. satisfying (\ref{impcen}). If such a curve exists, it would correspond to the system undergoing periodic oscillations around $(u_s,v_s)$, thus the system exhibiting transcritical bifurcation, which also implies the existence of a region in the parameter space that corresponds to real parts of $\lambda_{1,2}$ to be positive. Upon investigating this region, it is found that such a curve under condition (\ref{cond}) does not exists, and all roots corresponding to the cubic polynomial given by (\ref{impcen}) are either complex eigenvalues or they are negative real values, therefore cannot be the choice of admissible parameters of the system. Another observation is that as $d$ increases the area of the region corresponding to complex eigenvalues in the $(\alpha,\beta)$ plane gradually decreases. The domain size of length $L=5$ satisfying the condition of Theorem \ref{theorem1} is tested for stability analysis of $(u_s,v_s)$ when $\lambda_{1,2}$ are complex eigenvalues. The results are shown for different values of $d$ by a distinct colour in Figure \ref{fig6} (a). Finding the solutions of these curves does not however indicate, which side of them correspond to complex $\lambda_{1,2}$ and which side to real $\lambda_{1,2}$. This is decided by some numerical trial and error by evaluating $\lambda_{1,2}$ using few values from both sides of each curve. Trial and error indicates that the regions under these curves correspond to $\lambda_{1,2}$ to be a complex conjugate pair and hence any combination of $(\alpha,\beta)$ from this region ensures that the eigenvalues $\lambda_{1,2}$ contain a non-zero imaginary part. When the area $L^2$ of $\Omega$ satisfies condition (\ref{cond}), the absence of a region satisfying (\ref{impcen}) verifies the statement of Theorem \ref{theorem1}, therefore the region corresponding to complex eigenvalues for the choice of $L^2$ satisfying (\ref{cond}) has no sub-partitions, because everywhere in this region the real part of the eigenvalues is negative, hence no choice of parameters could result in the system to exhibit Hopf or transcritical bifurcation. The eigenvalues only become positive when they are both real values, therefore condition (\ref{cond}) restricts the diffusion-driven instability to Turing type only. Figure \ref{fig7} (a) shows how the partitioning curve changes location, as the value of $d$ is varied. The region under those curves corresponds to the complex eigenvalues. Similarly Figure \ref{fig7} (b) indicates the regions corresponding to complex eigenvalues for the corresponding values of $d$. Each stripe in Figure \ref{fig7} (b) is denoted by a letter that represents the set of all  points corresponding to a distinct colour stripe.  Similarly, the regions where the eigenvalues $\lambda_{1,2}$ are negative real roots are presented by Figure \ref{fig8} (a), corresponding to the same values of the parameter $d$. Figure \ref{fig8} (b) shows regions where at least one or both eigenvalues are positive real roots. The summary of Figures \ref{fig7} and \ref{fig8} is presented in Table \ref{Table2}. 
\begin{figure}[ht]
 \centering
 \small
  \begin{subfigure}[ht]{.47\textwidth}
    \centering
 \includegraphics[width=\textwidth]{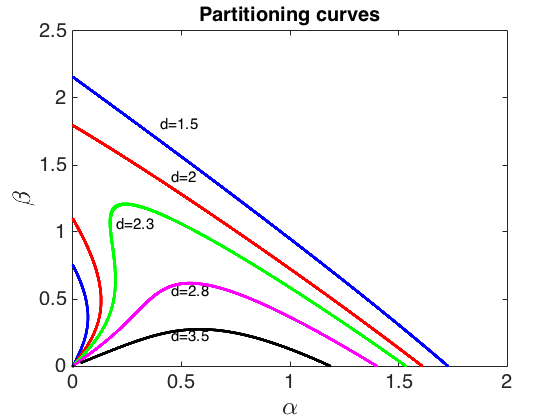}
 \caption{Boundary curves for complex $\lambda_{1,2}$\\ with corresponding values of $d$ and the \\condition $L^2=25<\pi^2\frac{(d+1)(m^2+n^2)}{\gamma}$.}
 \label{curves}
 \end{subfigure}
 \begin{subfigure}[ht]{.47\textwidth}
 \centering
    \includegraphics[width=\textwidth]{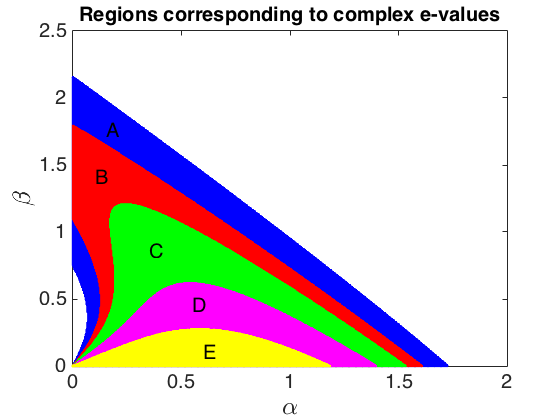}
  \caption{Regions corresponding to complex eigenvalues associated to values of $d$ indicated in Figure (a) and condition (\ref{cond}).}
  \label{regions}
 \end{subfigure}
 \caption{Parameter regions corresponding to complex eigenvalues and their boundary curves for various values of $d$ and domain size $L$ restricted to the condition (\ref{cond}) of Theorem \ref{theorem1}.}
  \label{fig7}
\end{figure}
 
\begin{figure}[ht]
 \centering
 \small
  \begin{subfigure}[ht]{.47\textwidth}
    \centering
 \includegraphics[width=\textwidth]{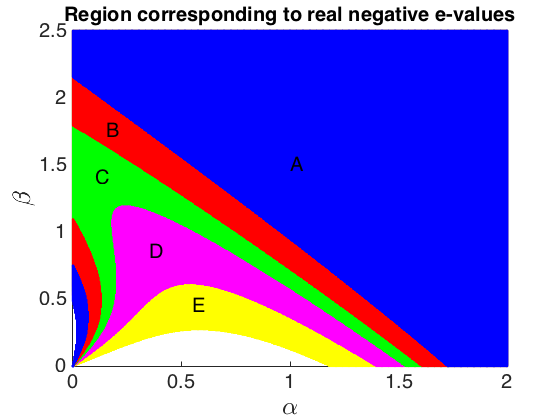}
 \caption{Regions where $\lambda_{1,2}$ are both real\\ and negative.}
 \label{negreal}
 \end{subfigure}
 \begin{subfigure}[ht]{.47\textwidth}
 \centering
    \includegraphics[width=\textwidth]{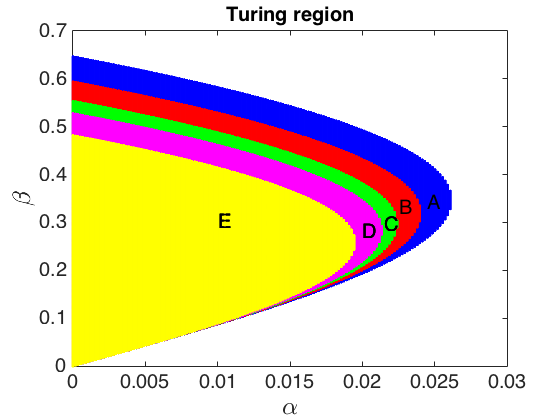}
  \caption{Regions where both eigenvalues are real and at least one of $\lambda_1$ or $\lambda_2$ are positive.}
  \label{posreal}
 \end{subfigure}
 \caption{The region where $\lambda_{1,2}$ are both real for various values of $d$ and domain size $L^2$ restricted to the condition (\ref{cond}) in Theorem \ref{theorem1}.}
  \label{fig8}
\end{figure}
\begin{table}[hbtp]
\centering
\small
\tabcolsep=0.3cm
\noindent\adjustbox{max width=\textwidth}{
\begin{tabular}{|c |c |c |c |c| c|}
\cline{1-1}
\hline
 Plot index& Figure \ref{fig8} (a) & Figure \ref{fig8} (b)& Figure \ref{fig7}  (b)&  Figure \ref{fig7} (b)& Figure \ref{fig7} (b)\\
\hline
 Eigenvalues& $0>\lambda_{1,2} \in \mathbb{R}$ & $0<\lambda_{1,2} \in \mathbb{R}$& $\lambda \in \mathbb{C}, Re(\lambda_{1,2})<0$& $\lambda \in \mathbb{C}, Re(\lambda_{1,2})>0$ &$\lambda\in\mathbb{C}, Re(\lambda)=0$\\
 \hline
\diaghead{\theadfont{\normalsize} Type of (SSSS) }{Value of $d$}{Type of (SS)}&Stable node& Turing type instability& Stable spiral&Hopf bifurcation&Transcritical bifurcation\\
\hline
1.5 & $A$ & $E$ & $A \cup B\cup C \cup D \cup E$ & $\emptyset$ & $\emptyset$\\
\hline
2 & $A \cup B$ & $E \cup D$& $B\cup C \cup D \cup E$ & $\emptyset$ & $\emptyset$\\
\hline
2.3  & $A \cup B\cup C$ & $E \cup D \cup C$ &  $C \cup D \cup E$ & $\emptyset$& $\emptyset$\\
\hline
2.8  & $A \cup B\cup C \cup D$ & $E \cup D \cup C \cup B$   &   $D \cup E$ & $\emptyset$& $\emptyset$\\
\hline
3.5 & $A \cup B\cup C \cup D \cup E$ & $E \cup D \cup C \cup B \cup A$ &  $E$ & $\emptyset$& $\emptyset$\\
\hline
\end{tabular}}
\caption{Table showing regions corresponding to domain size $L^2 < \frac{\pi^2(d+1)(m^2+n^2)}{\gamma}$, which satisfies (\ref{cond}) of Theorem \ref{theorem1}.}
\label{Table2}
\end{table}
The algorithm is also run for the case when $L^2$ is chosen such that it satisfies (\ref{cond1}) and we find that the region corresponding to complex eigenvalues is further partitioned by the curve satisfying (\ref{realzero}). This is the curve on which the eigenvalues are purely imaginary. This curve also indicates that within the region corresponding to complex eigenvalues there is a sub-region in which the eigenvalues are complex with positive real part, which corresponds to the system exhibiting Hopf bifurcation. For choices of parameter values on the curve the system is expected to undergo transcritical bifurcation. The area of the domain $\Omega$ is taken as $L^2 = 225$ in order to satisfy the condition given by Theorem \ref{theorem2} with respect to $d$ and $\gamma$, so that there is enough space for varying $d$ and yet maintaining condition (\ref{cond1}).
It can be easily observed from the parameter space classification that there is a relatively small region in each case corresponding to diffusion-driven instability. This indicates the importance of making sure to choose parameter choices wisely, in order to expect the dynamics of the system to evolve to certain types of patterns or maybe no patterns at all. Table \ref{Table3} presents the summary for how the regions of the parameter space change with varying the parameter $d$. It would be reasonable to use the same variation of the parameter $d$ in both cases of the domain sizes, however, in the first case where the domain size satisfies (\ref{cond}) the span of varying parameter $d$ is relatively small, yet causing significant observable change in the parameter space. However, when the domain size is chosen according to condition (\ref{cond1}), small variations in the diffusion coefficient makes insignificant changes to the parameter spaces, therefore, in order to pictorially observe the consequential change in the parameter plane $(\alpha,\beta)$ the span of variations for $d$ have to be significantly large as seen in Table \ref{Table2}.
\begin{figure}[ht]
 \centering
 \small
  \begin{subfigure}[ht]{.47\textwidth}
    \centering
 \includegraphics[width=\textwidth]{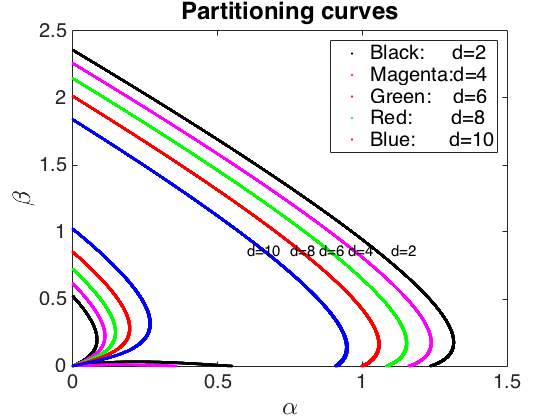}
 \caption{Boundary curves for complex $\lambda_{1,2}$\\ with corresponding values of $d$ and \\the condition $L^2=225>\pi^2\frac{(d+1)(m^2+n^2)}{\gamma}$.}
 \label{curvesthm2}
 \end{subfigure}
 \begin{subfigure}[ht]{.47\textwidth}
 \centering
    \includegraphics[width=\textwidth]{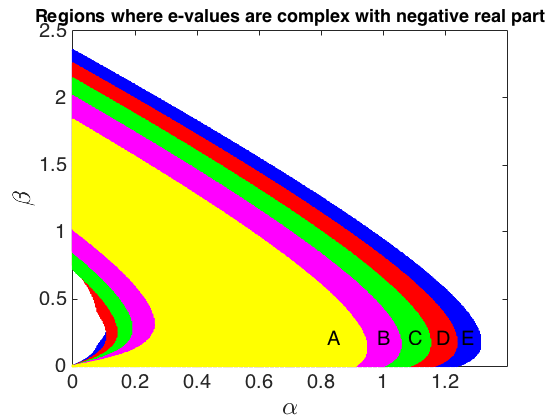}
  \caption{Regions corresponding to complex eigenvalues associated to values of $d$ indicated in Figure (a) and condition (\ref{cond1}).}
  \label{regionsthm2}
 \end{subfigure}
 \caption{Parameter regions corresponding to complex eigenvalues and their boundary curves for various values of $d$ and domain size $L$ restricted to the condition (\ref{cond1}) of Theorem \ref{theorem2}.}
  \label{fig9}
\end{figure}
\begin{figure}[H]
 \centering
 \small
  \begin{subfigure}[ht]{.47\textwidth}
    \centering
 \includegraphics[width=\textwidth]{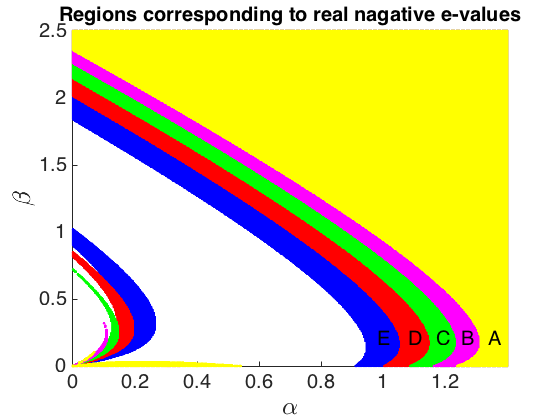}
 \caption{Regions where $\lambda_{1,2}$ is real distinct \\and negative.}
 \label{stabregion}
 \end{subfigure}
 \begin{subfigure}[ht]{.47\textwidth}
 \centering
    \includegraphics[width=\textwidth]{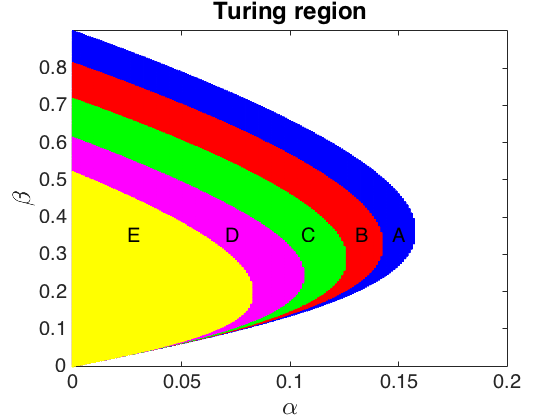}
  \caption{Region where $\lambda_{1,2}$ are real distinct and at least one of them positive.} 
  \label{unstabregion}
 \end{subfigure}
 \caption{Different colours denote different regions of stability of the model system  \eqref{B}-\eqref{zerofluxBCs} for different values of $d$.}
  \label{fig10}
\end{figure}
\begin{figure}[ht]
 \centering
 \small
  \begin{subfigure}[ht]{.49\textwidth}
    \centering
 \includegraphics[width=\textwidth]{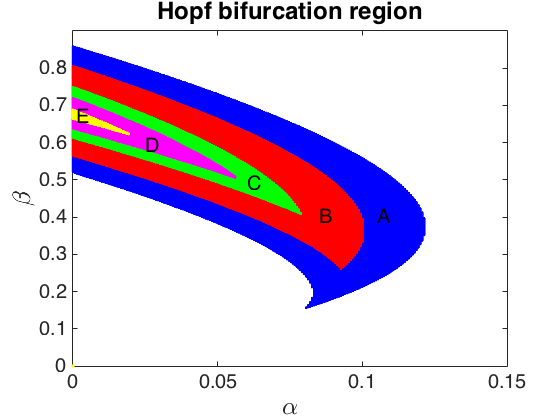}
 \caption{Region for complex $\lambda_{1,2}$ and\\ positive real part with corresponding\\ values of $d$ and the condition\\ $L^2=225>\pi^2\frac{(d+1)(m^2+n^2)}{\gamma}$.}
 \label{diffdrivthm2}
 \end{subfigure}
 \begin{subfigure}[ht]{.49\textwidth}
 \centering
    \includegraphics[width=\textwidth]{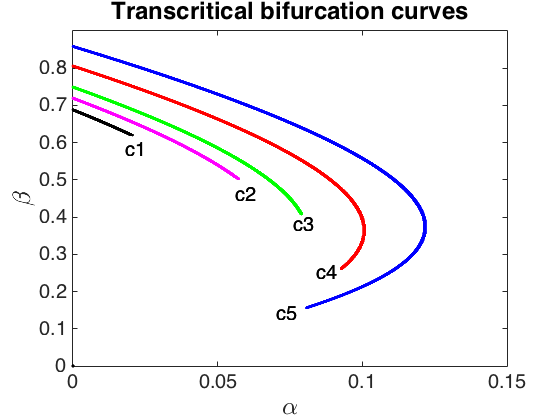}
  \caption{Regions corresponding to complex eigenvalues associated to values of $d$ indicated in Figure (a) and condition (\ref{cond1}).}
  \label{transc}
 \end{subfigure}
 \caption{Parameter regions corresponding to complex eigenvalues and their boundary curves for various values of $d$ and domain size $L$ restricted to the condition (\ref{cond1}) of Theorem \ref{theorem2}.}
  \label{fig11}
\end{figure}
\begin{table}[hbtp]
\centering
\small
\tabcolsep=0.3cm
\noindent\adjustbox{max width=\textwidth}{
\begin{tabular}{|c |c |c |c |c| c|}
\cline{1-1}
\hline
 Plot index& Figure \ref{fig10} (a) & Figure \ref{fig10} (b)& Figure \ref{fig9}  (b)&  Figure \ref{fig11} (a)& Figure \ref{fig11} (b)\\
\hline
 Eigenvalues& $0>\lambda_{1,2} \in \mathbb{R}$ & $0<\lambda_{1,2} \in \mathbb{R}$& $\lambda \in \mathbb{C}, Re(\lambda_{1,2})<0$& $\lambda \in \mathbb{C}, Re(\lambda_{1,2})>0$ &$\lambda\in\mathbb{C}, Re(\lambda)=0$\\
 \hline
\diaghead{\theadfont{\normalsize} Type of (SSSS) }{Value of $d$}{Type of (SS)}&Stable node& Turing type instability& Stable spiral&Hopf bifurcation&Transcritical bifurcation\\
\hline
2 & $A$ & $E$ & $A \cup B\cup C \cup D \cup E$ & $A \cup B\cup C \cup D \cup E$ & $c_5$\\
\hline
4 & $A \cup B$ & $E \cup D$& $B\cup C \cup D \cup E$ & $B\cup C \cup D \cup E$ & $c_4$\\
\hline
6  & $A \cup B\cup C$ & $E \cup D \cup C$ &  $C \cup D \cup E$ &$C \cup D \cup E$& $c_3$\\
\hline
8  & $A \cup B\cup C \cup D$ & $E \cup D \cup C \cup B$   &   $D \cup E$ & $D \cup E$& $c_2$\\
\hline
10 & $A \cup B\cup C \cup D \cup E$ & $E \cup D \cup C \cup B \cup A$ &  $E$ & $E$& $c_1$\\
\hline
\end{tabular}}
\caption{Table showing regions corresponding to the domain size $L^2 = 225 \geq \frac{\pi^2(d+1)(m^2+n^2)}{\gamma}$, which satisfies (\ref{cond1}) of Theorem \ref{theorem2}.}
\label{Table3}
\end{table}

In the case when $L^2$ satisfies Theorem \ref{theorem1}, it can be observed that the type of diffusion-driven instability that can occur is restricted to Turing type only, which is increased as the non-dimensional diffusion coefficient $d$ was increased. Turing diffusion-driven instability also occurs for the case when $L^2$ satisfies condition (\ref{cond1}) and it also increases with increased value of $d$ as shown in Figure \ref{fig10} (b). The interpretation is that regions in the parameter space exist, which result in the system to be stable in the absence of diffusion, but when diffusion is added to the system, the choice of the parameters from these particular regions result in the system exhibiting instability. This type of instability is restricted to space and hence leads to spatial patterning only, because the eigenvalues are both real. However, if $L^2$ satisfies (\ref{cond1}), then in addition to the existence of regions of the parameter space corresponding to Hopf and transcritical bifurcations, it can also be observed that with increased values of parameter $d$, unlike Turing instability, the regions for Hopf and transcritical bifurcations reduce as shown in Figure \ref{fig11} (a) and (b). The mathematical intuition for this inverse relation is the direct proportionality of $d$ on the lower bound for the domain size i.e. the right-hand side of (\ref{cond1}). Therefore, as $d$ grows, one gets closer to the violation of the necessary condition (\ref{cond1}) for Hopf and transcritical bifurcations as proposed by Theorem \ref{theorem2}. However, the region for Hopf and transcritical bifurcations remains to exist as long as $L^2$ satisfies (\ref{cond1}), which ultimately means that for the system to exhibit temporal periodicity (patterning in time) the domain size has to be sufficiently large satisfying (\ref{cond1}).
\subsubsection{Remark}
 The eigenfunction modes used for the current simulations corresponds to the first non-trivial (non-zero) eigenvalues, namely the case where $m=n=1$. The results can be readily obtained for any positive integer values of $m$ and $n$. Despite the fact that using larger integer values for $m$ and $n$ will shift or scale the classification for the parameter space, the conditions given by Theorem \ref{theorem1} and Theorem \ref{theorem2} will remain intact irrespective of the values of $m$ and $n$, so long as they are positive integers.

\subsection{Numerical experiments}
To validate the proposed classification of parameter space, the reaction-diffusion system (\ref{B})-\eqref{ICs} is numerically solved using the finite element method \cite{paper3, paper8, paper12, book2} on a unit square domain with a uniform triangular mesh.
Numerical simulations are performed for various choices of parameter values $\alpha$ and $\beta$, chosen from the appropriate parameter spaces to demonstrate and validate our theoretical findings. In all our simulations, we vary the parameters $d$ and $\gamma$ and keep fixed the domain length size $L$ and this allows us to keep constant the well refined number of degrees of freedom for the mesh. 
The initial conditions for each simulation are taken as small random perturbations around the neighbourhood of the steady state of the form \cite{paper15, paper29}
\begin{equation}
\begin{cases}
u_0(x,y)=\alpha+\beta+0.0016\cos(2\pi(x+y))+0.01\sum_{i=1}^8\cos(i\pi x),\\
v_0(x,y)= \frac{\beta}{(\alpha+\beta)^2}+0.0016\cos(2\pi(x+y))+0.01\sum_{i=1}^8\cos(i\pi x).
\end{cases}
\label{initial}
\end{equation}
The final time $T$ is chosen to ensure that beyond $T$ the convergence of solutions in successive time-step differences decay to a threshold of $10^{-5}$ for all the simulations. Therefore, in the case where the area $L^2$ of the domain size satisfies condition (\ref{cond}) proven in Theorem \ref{theorem1}, when temporal instability is forbidden by the condition on the domain size, the final time of numerical simulations is relatively shorter compared to that used for other cases. For numerical simulations on the domain size $L^2$ satisfying condition (\ref{cond1}) proven in Theorem \ref{theorem2}, the final time is experimented for longer periods to capture possible existence of temporal periodicity in the dynamics. Simulations with domain size satisfying (\ref{cond1}) are captured at different times, which varies with cases, for which details are included in Table \ref{Table4}.
In all our numerical results, we only exhibit numerical solutions corresponding to the $u(x,y,t)$ component, those of $v(x,y,t)$ are known to be 180-degrees out of phase to those of $u$. 

Figure \ref{fig14} presents the case where the parameters $\alpha=2$ and $\beta=2$ are chosen from stable node region of Table \ref{Table2}, with final time $T=2$.  It can be observed that the evolved profile of the concentration $u$ uniformly converges to the steady state value forming neither spatial nor temporal patterns (Figure \ref{fig14}  (a)). As predicted in Theorem \ref{theorem1}, no choice of parameters $\alpha$ and $\beta$ can influence the dynamics to exhibit temporal periodicity, therefore any choice of the parameters outside Turing space, given that the domain size satisfies (\ref{cond}) will uniformly converge to the stable steady state $(u_s,v_s)=(\alpha+\beta, \frac{\beta}{(\alpha+\beta)^2})  \approx (4,0.125)$ as seen in Figure \ref{fig14} (a). Figure \ref{fig14} (b) shows the uniform convergence of the discrete $L_2$-norm difference between solutions at successive time-steps to the constant steady state $(u_s,v_s)$, where there is no sign of instability occurring, when parameters are outside the Turing space.    
\begin{figure}[H]
 \centering
 \small
  \begin{subfigure}[h]{.49\textwidth}
    \centering
 \includegraphics[width=\textwidth]{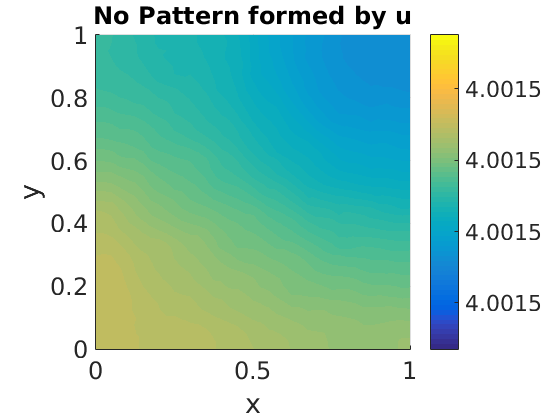}
 \caption{The evolved discrete solution $U$ at\\ the final time step $T=2$.}
 \label{pattern2}
 \end{subfigure}
 \begin{subfigure}[h]{.49\textwidth}
 \centering
    \includegraphics[width=\textwidth]{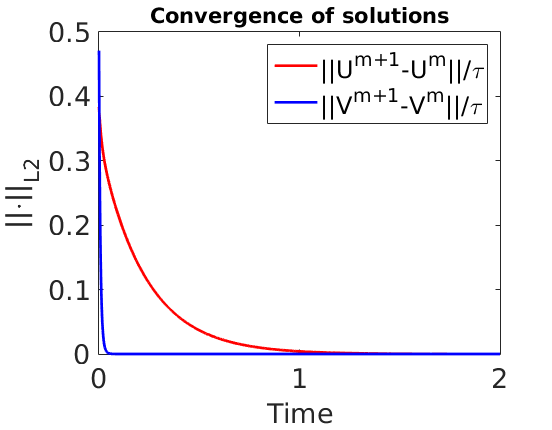}
  \caption{Convergence of the difference of the discrete solution $U$ and $V$ for successive time steps. }
  \label{Error2}
 \end{subfigure}
 \caption{Finite element numerical simulations corresponding to the $u$ of the model system \eqref{B}-\eqref{zerofluxBCs} with parameter values $\alpha$ and $\beta$ selected from outside  the Turing space (see Table \ref{Table4} for values) and domain size satisfying (\ref{cond}). No patterns are obtained in agreement with the theoretical predictions. }
  \label{fig14}
\end{figure}
\begin{figure}[H]
 \centering
 \small
  \begin{subfigure}[h]{.49\textwidth}
    \centering
 \includegraphics[width=\textwidth]{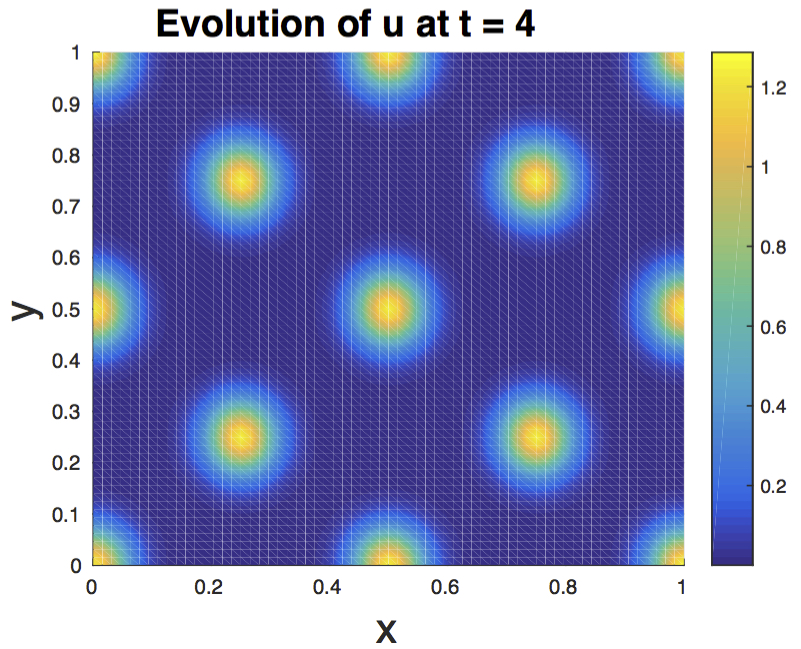}
 \caption{The evolved discrete solution \\$U$ at the final time step $T=4$.}
 \label{pattern1}
 \end{subfigure}
 \begin{subfigure}[h]{.49\textwidth}
 \centering
    \includegraphics[width=\textwidth]{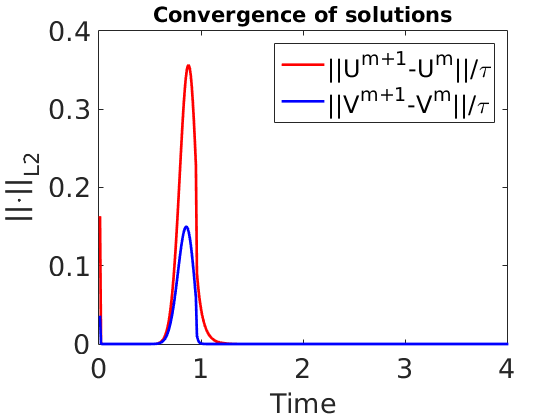}
  \caption{Convergence of the difference of the discrete solution $U$ and $V$ for successive time steps. }
  \label{Error1}
 \end{subfigure}
 \caption{Finite element numerical simulations corresponding to the $u$ of the model system \eqref{B}-\eqref{zerofluxBCs} with parameter values $\alpha$ and $\beta$ selected from  the Turing space (see Table \ref{Table4} for values) and domain size satisfying (\ref{cond}).  We observe the formation of spot patterns, again in agreement with the theoretical predictions.}
  \label{fig13}
\end{figure}
Figure \ref{fig13} (a) presents the evolved profile of the solution captured at the final time step, for the choice of parameters from Turing space presented in Table \ref{Table4}. The convergence in the discrete $L_2$-norm of difference in the solutions for the successive time steps from start to the end of the simulations are also plotted against time and presented in Figure \ref{fig13} (b).
\begin{figure}[H]
 \centering
 \small
  \begin{subfigure}[h]{.49\textwidth}
    \centering
 \includegraphics[width=\textwidth]{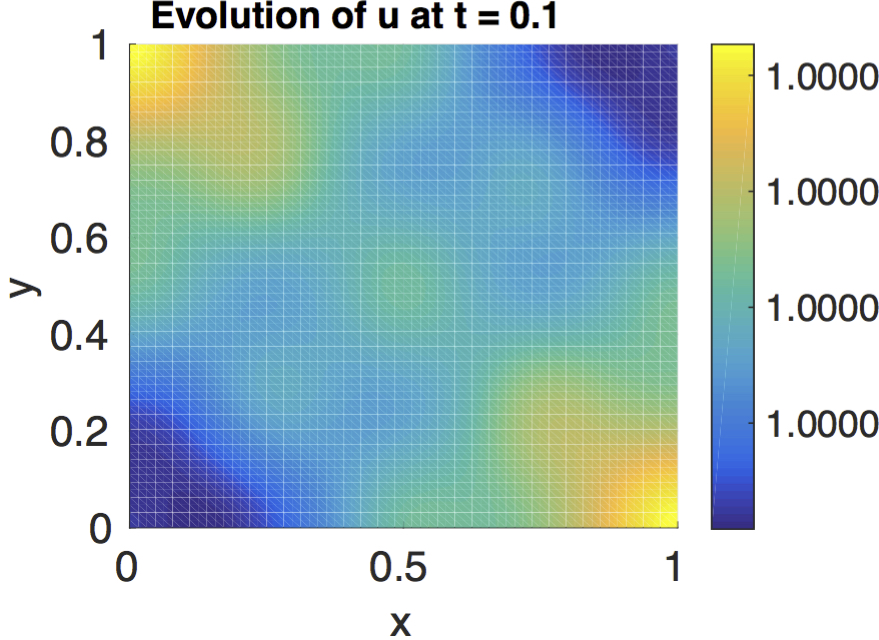}
 \caption{The evolving discrete solution\\ $U$   captured at time step $t=0.1$.}
 \label{pattern2aone}
 \end{subfigure}
 \begin{subfigure}[h]{.49\textwidth}
 \centering
    \includegraphics[width=\textwidth]{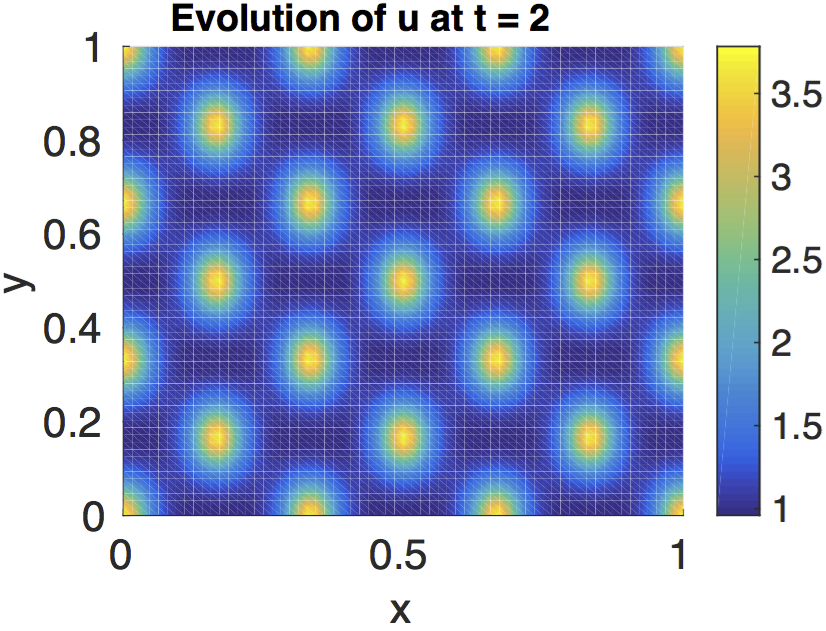}
  \caption{The evolving discrete solution $U$ \\captured at time step $t=2$. }
  \label{pattern2atwo}
 \end{subfigure}
 \begin{subfigure}[h]{.49\textwidth}
    \centering
 \includegraphics[width=\textwidth]{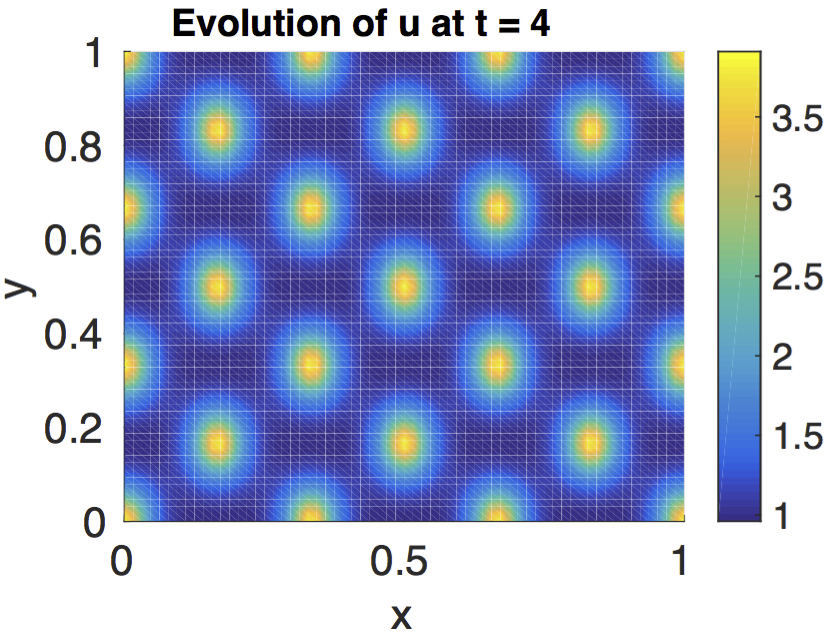}
 \caption{The evolved discrete solution $U$ \\captured at final time step $t=4$. }
 \label{pattern2athree}
 \end{subfigure}
 \begin{subfigure}[h]{.49\textwidth}
 \centering
    \includegraphics[width=\textwidth]{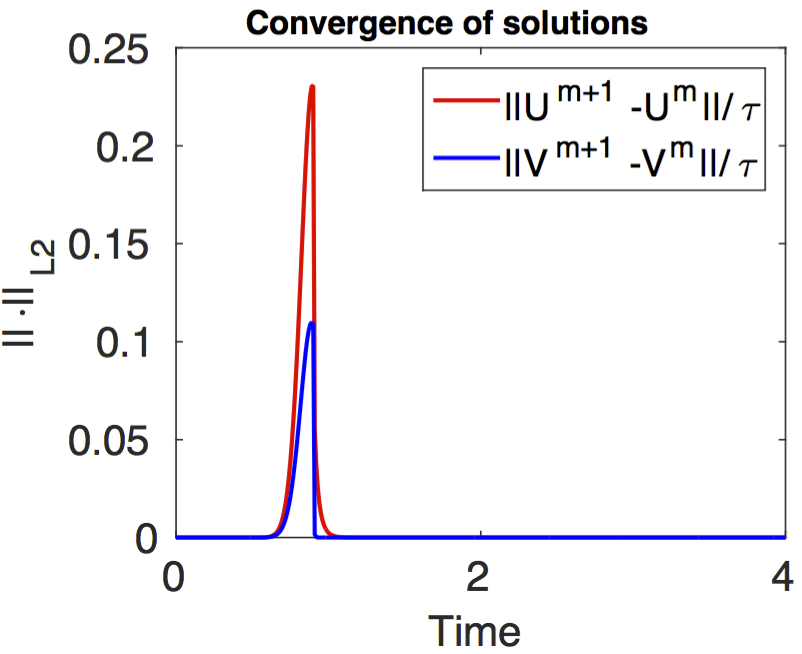}
  \caption{Convergence of the difference of the \\discrete solution $U$ and $V$ for successive\\ time steps. }
  \label{convergence3}
 \end{subfigure}
 \caption{Finite element numerical simulations corresponding to the $u$ of the model system \eqref{B}-\eqref{zerofluxBCs} with parameter values $\alpha$ and $\beta$ selected from  the Turing space (see Table \ref{Table4} for values) and domain size satisfying (\ref{cond1}). We observe the formation of spot patterns which are more clustered than those obtained in Figure \ref{fig13}.}
  \label{fig15}
\end{figure}
Figure \ref{fig15} shows the simulation for the choice of parameters from the Turing space presented in Table \ref{Table3} for the domain size satisfying (\ref{cond1}). In this case as predicted in Theorem \ref{theorem2}, regions of parameter space exist for which the dynamic can exhibit Hopf type bifurcation, therefore, the possibility of temporal periodicity in the dynamics. Figure \ref{fig16} presents such periodicity in time for spatial patterns. The relative discrete $L_2$-norm of the difference in the solutions for successive time-steps is therefore showing time periodicity as illustrated in Figure \ref{fig16} (d), which indicates the transition of the solution from the initially achieved spatial pattern to a different spatial pattern.
\begin{figure}[H]
 \centering
 \small
  \begin{subfigure}[h]{.49\textwidth}
    \centering
 \includegraphics[width=\textwidth]{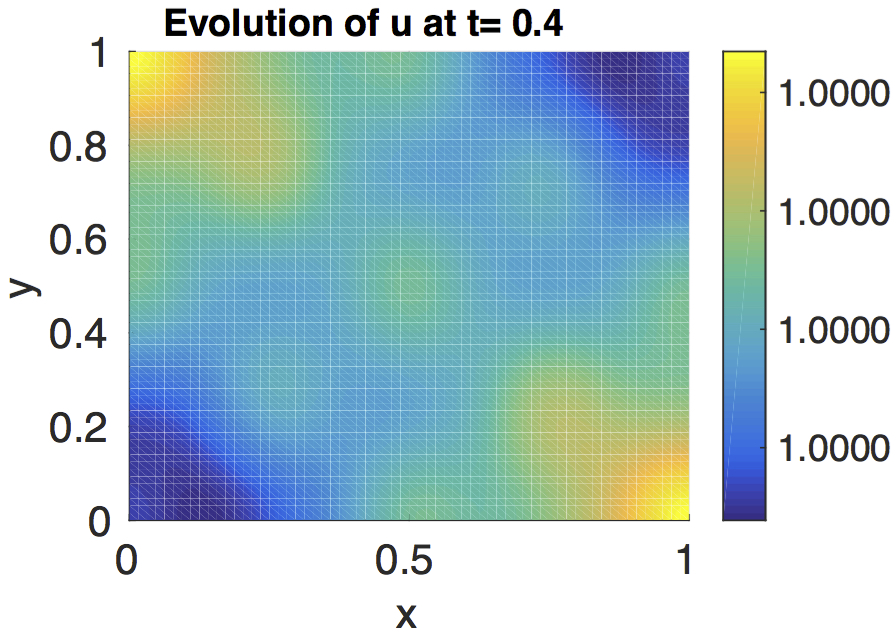}
 \caption{The evolving discrete solution\\ $U$   captured at time step $t=0.4$.}
 \label{pattern2bone}
 \end{subfigure}
 \begin{subfigure}[h]{.49\textwidth}
 \centering
    \includegraphics[width=\textwidth]{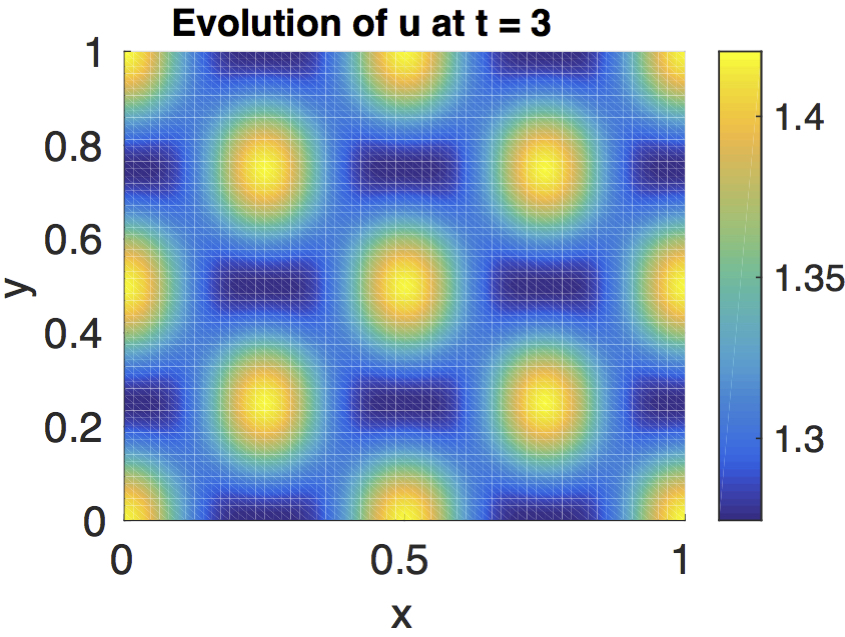}
  \caption{The evolving discrete solution $U$ \\captured at time step $t=3$. }
  \label{pattern2btwo}
 \end{subfigure}
 \begin{subfigure}[h]{.49\textwidth}
    \centering
 \includegraphics[width=\textwidth]{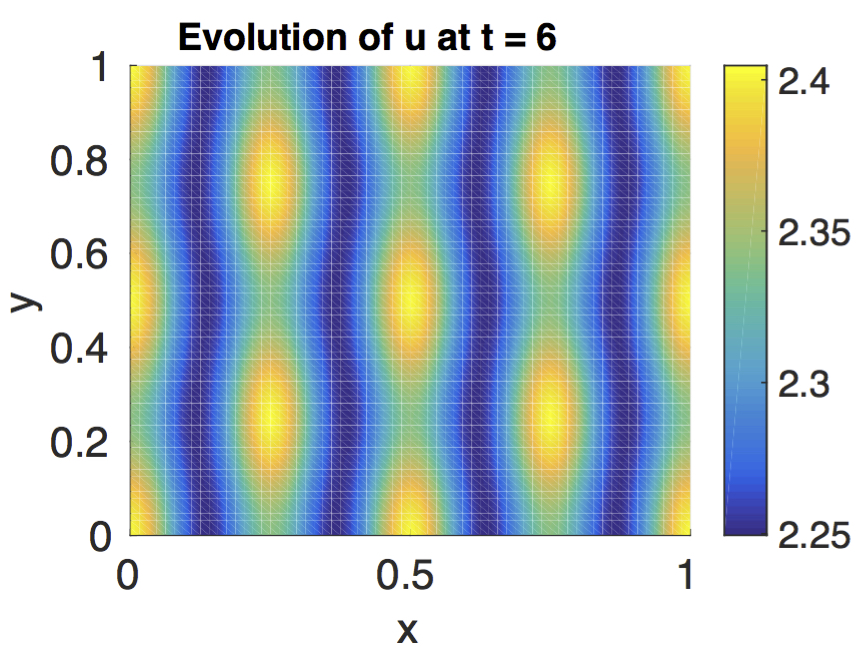}
 \caption{The evolved discrete solution $U$ \\captured at final time step $t=6$. }
 \label{pattern2bthree}
 \end{subfigure}
 \begin{subfigure}[h]{.49\textwidth}
 \centering
    \includegraphics[width=\textwidth]{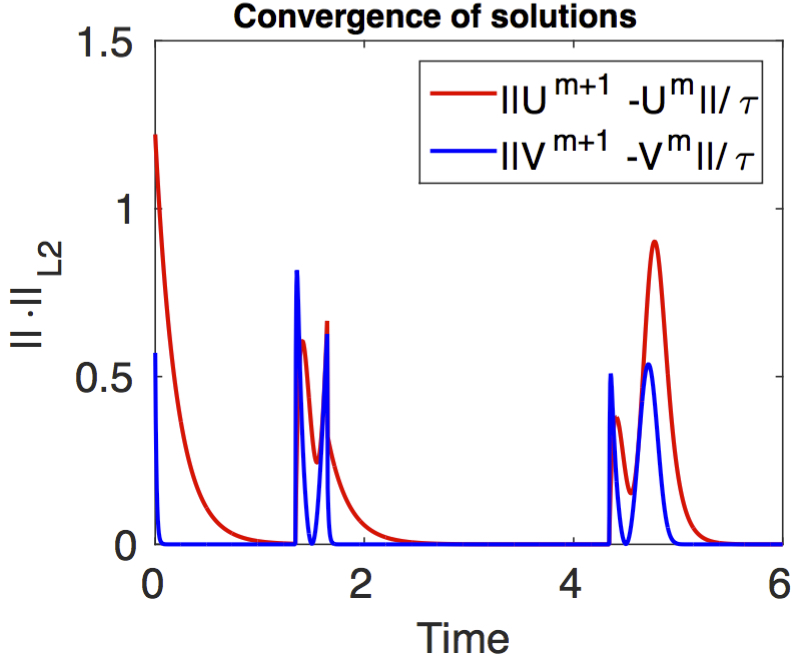}
  \caption{Convergence of the difference of the \\discrete solution $U$ and $V$ for successive\\ time steps. }
  \label{convergence4}
 \end{subfigure}
 \caption{Finite element numerical simulations corresponding to the $u$ of the model system \eqref{B}-\eqref{zerofluxBCs} with parameter values $\alpha$ and $\beta$ selected from the Hopf/Transcritical bifurcation region (see Table \ref{Table4} for values) and domain size satisfying (\ref{cond1}). We observe the formation of spatial-temporal periodic patterning in agreement with theoretical predictions}
  \label{fig16}
\end{figure}
\begin{table}[H]
\centering
\small
\tabcolsep=0.3cm
\noindent\adjustbox{max width=\textwidth}{
\begin{tabular}{|c |c |c |c |c|}
\cline{1-1}
\hline
 Plot index& Figure \ref{fig14}  & Figure \ref{fig13} & Figure \ref{fig15} &  Figure \ref{fig16}\\
\hline
\diaghead{\theadfont{\normalsize} Type of (SSSS) }{Parameters}{Instability}&\shortstack{No instability\\ No pattern}& \shortstack{Turing type instability\\Spatial pattern}& \shortstack{Turing type instability\\Spatial pattern}&\shortstack{Hopf bifurcation\\Spatial and temporal pattern}\\
\hline
$(\alpha,\beta)$ & $(2,2)$ & $(0.025, 0.4)$ & $(0.1, 0.6)$ & $(0.05, 0.8)$ \\
\hline
$(d, \gamma)$ & $ (10, 210)$ &$(10, 210)$& $ (10, 300)$ & $ (10,350 )$ \\
\hline
Condition on $\Omega$ & (\ref{cond}) & (\ref{cond}) &  (\ref{cond1}) &(\ref{cond1})\\
\hline
Simulation time & $2$ & $4$ &  $4$ & $6$\\
\hline
CPU time (sec) & $77.14$ & $149.60$ &  $149.55$ & $215.02$\\
\hline
\end{tabular}}
\caption{Showing the choice of parameters $(\alpha,\beta)$ for each simulation and the choice of $(d,\gamma)$ subject to the relevant condition referred to in third row. Each simulation was run with time-step of $1\times10^{-3}$.} 
\label{Table4}
\end{table}
\section{Conclusion}\label{conclusion}
In this work the full parameter space for a reaction-diffusion system with {\it activator-depleted} reaction kinetics was classified by use of linear stability theory and in each region of the parameter space the dynamics of the reaction-diffusion system was investigated. In the absence of diffusion, theoretical results on the dynamics of the system were supported by use of the phase-plane analysis, where in each case the numerical solution of the system was observed to be in agreement with the theoretically predicted behaviour. 
In the presence of diffusion, for a two-component reaction-diffusion system, two conditions relating the domain size to the diffusion and reaction rates were derived. The proofs of these conditions were presented in Theorems \ref{theorem1} and \ref{theorem2}, respectively. For full classification of the parameter space, a numerical method was used to compute the solutions of the implicit curves in the parameter space forming the partitions of classification. 
In particular, using condition (\ref{cond}) the numerical method for solving the partitioning curves showed  the non-existence of a region in parameter space that (if existed) would lead to Hopf bifurcation or transcritical instability. Similarly, applying the numerical method to compute the partitioning curves under condition (\ref{cond1}), it was shown that regions in parameter space exist for both Hopf and transcritical bifurcation. Parameters from Hopf bifurcation region as well as from Turing regions under both conditions on the domain size were shown to be in agreement with the theoretical prediction, when the system was numerically solved by using the finite element method. For each simulation the discrete $L_2$-norm of the successive time-step difference of the solutions is also given to visualise the temporal dynamics of the behaviour of the solutions during the  convergence process.
\subsection{Ideas for future work}
This work sets the premises to study more complex systems of non-autonomous reaction-diffusion equations during growth development whereby the parameter spaces are continuously evolving due to domain growth. It will be revealing to study how the Turing diffusion-driven parameter space, the Hopf and Transcritical regions evolve with time and how the dynamics of the model system evolve. The motivation for such extension is to find whether the conditions (\ref{cond}) and (\ref{cond1}) for the domain size with reaction and diffusion rates continue to hold or whether a threshold for the domain size exists, beyond which, these conditions can be invalidated. Another direction of further investigation of the current work is to try and find similar relationships for different geometries such as spherical, elliptical and cylindrical domains. It may be noted that the eigenvalues and eigenfunctions of the Laplace operator change significantly depending on the boundary conditions and the geometry of the domain, therefore, application of the present method suggests, that finding similar conditions for other geometries require a careful step-by-step procedure to find analogous conditions to those given by (\ref{cond}) and (\ref{cond1}). It is also possible to apply the idea of the present work to the problems of pattern formation on bulk-surface geometries, where the relationship of surface area with reaction and diffusion rates can be explored to find the relevant influence of this relationship on the formation of spatial and/or temporal patterns. In \cite{paper20} it is shown that the known condition of $d>1$ is no longer necessary for pattern formation in the presence of linear cross-diffusion, which also provides a possible platform for the extension of the current work, to see, whether the invalidation of the condition $d>1$ in the presence of linear cross-diffusion has any dependence with the relationship of reaction and diffusion rates with the domain size.        
\section{Acknowledgments}
WS acknowledges support of the EPSRC Doctoral Training studentship. AM acknowledges support from the Leverhulme Trust Research Project Grant (RPG-2014-149), the European Union's Horizon 2020 research and innovation programme under the Marie Sklodowska-Curie grant agreement No 642866, and his work was partially supported by the Engineering and Physical Sciences Research Council, UK grant (EP/J016780/1). The authors (WS, AM) thank the Isaac Newton Institute for Mathematical Sciences for its hospitality during the programme (Coupling Geometric PDEs with Physics for Cell Morphology, Motility and Pattern Formation; EPSRC EP/K032208/1).  AM was partially supported by Fellowships from the Simons Foundation. AM is a Royal Society Wolfson Research Award Holder, generously supported by the Wolfson Trust.

%
%



\bibliographystyle{elsarticle-num}

\bibliography{sample}

\begin{thebibliography}{10}
\expandafter\ifx\csname url\endcsname\relax
  \def\url#1{\texttt{#1}}\fi
\expandafter\ifx\csname urlprefix\endcsname\relax\def\urlprefix{URL }\fi
\expandafter\ifx\csname href\endcsname\relax
  \def\href#1#2{#2} \def\path#1{#1}\fi

\bibitem{paper3}
{\sc A. Madzvamuse, A. H. W. Chung}, {\em The bulk-surface finite element
  method for reaction-diffusion systems on stationary volumes}., J. FE in
  Analysis and Design. 108 (2015) 0--21.

\bibitem{paper4}
{\sc A. Madzvamuse, A.~H.~W. Chung}, {\em Fully implicit time-stepping schemes
  and non-linear solvers for systems of reaction-diffusion equations}., J. App.
  Math. Comp. 244 (2014) 361--374.

\bibitem{paper5}
{\sc S.~Yan., X.~Lian, W.~ Wang, R.~K.~Upadhyay}, {\em Spatiotemporal dynamics
  in a delayed diffusive predator model}., J. App. Math. Comp. 244 (2013)
  524--534.

\bibitem{paper6}
{\sc R.~Barreira, C.~M.~Elliot, A.~Madzvamuse}, {\em The surface finite element
  method for pattern formation on evolving biological surfaces}., Online. J.
  Math. Biol. 29.

\bibitem{paper8}
{\sc R.~Zhang. X.~Yu, J.~Zhu, A.~F.~D.~Loula}, {\em Direct discontinuous
  Galerkin method for nonlinear reaction-diffusion systems in pattern
  formation}., J. App. Math. Mod. 38 (2014) 1612--1621.

\bibitem{paper9}
{\sc A.~Madzvamuse}, {\em Stability analysis of reaction-diffusion systems with
  constant coefficients on growing domains}., Int. J. Dyn. Diff. Eq, (2008).

\bibitem{paper10}
{\sc C.~Venkataraman, O.~Lakkis, A.~Madzvamuse}, {\em Global existence for
  semilinear reaction-diffusion systems on evolving domains}., J. Math. Biol.
  64 (2012) 41--67.

\bibitem{paper11}
{\sc J.~A.~Mackenzie, A.~Madzvamuse}, {\em Analysis of stability and
  convergence of finite difference method for a reaction-diffusion problem on a
  one dimensional growing domain}., J. Num. Anal. 322~(10) (2009) 891--921.

\bibitem{paper12}
{\sc A.~Madzvamuse}, {\em Time-stepping schemes for moving finite elements
  applied to reaction -diffusion systems on fixed and growing domains}., J.
  Comp. Phys. 214 (2006) 239--263.

\bibitem{paper2}
{\sc M.~A.~J. Chaplain, M. Ganish, I.~G. Graham}, {\em Spatio-temporal pattern
  formation on spherical surfaces: Numerical and application to solid tumour
  growth}., J. Math. Biol 42 (2001) 387--432.

\bibitem{paper7}
{\sc S.~Ghorai, S.~Poria}, {\em Turing pattern incduced by cross-diffusion in a
  predator-prey system in presence of habitat complexity}., J. Chaos. Solit.
  Fract. 91 (2016) 421--429.

\bibitem{paper23}
{\sc B.~I.~Henry, S.~L.~Wearne}, {\em Existence of Turing instabilities in a
  two-species fractional reaction-diffusion system}., SIAM J. App. Math 62~(3)
  (2007) 870--887.

\bibitem{paper24}
{\sc A.~Gierer, H.~Meinhardt}, {\em A Theory of Biological Pattern Formation
  }., Springer-Verlag. Berlin (1972) 30--39.

\bibitem{paper27}
{\sc A.~Madzvamuse, P.~K.~Maini, A.~J.~Wathen}, {\em A moving grid finite
  element method applied to a model biological pattern generator}., J. Comp.
  Phys. 190 (2003) 478--500.

\bibitem{paper30}
{\sc I.~Lengyel, I.~R.~Epstein}, {\em A chemical approach to designing Turing
  pattern in reaction-diffusion systems}., Proc. Natl. Acad. Sci. USA 89 (1992)
  3977--3979.

\bibitem{paper31}
{\sc K.~Jin.~Lee, W.~D.~McCormick, J.~E.~Pearson}, {\em Experimental
  observation of self-replicating spots in a reaction-diffusion system}.,
  Nature 369 (1994) 215--218.

\bibitem{paper32}
{\sc M.~Kim, M.~Bertram, M.~Pollmann, A.~V.~Oertzen, A.~S.~Mikhialov,
  H.~H.~Rotermund}, {\em Controlling chemical turbulance by global delayed
  feedback: Pattern formation in catalytic CO oxidation on Pt(110)}., Science
  5520 (18 May 2001) 891--921.

\bibitem{paper1}
{\sc A.~M.~Turing}, {\em The chemical basis of morphogenesis}., Philos. Trans.
  R. Soc. Lond 237 (1952) 37--72.

\bibitem{paper15}
{\sc O.~Lakkis, A.~Madzvamuse, C.~Venkataraman}, {\em Implicit-explicit
  timestepping with finite element approximation of reaction-diffusion systems
  on evolving domains}., SIAM J. Num. Anal. 51~(4) (2014) 2309--2330.

\bibitem{paper20}
{\sc A.~Madzvamuse, H.~S.~Ndakwo, R.~Barreira}, {\em Stability analysis of
  reaction-diffusion models on evolving domains: The effect of
  cross-diffusion}., Discr. Cont. Dyn. Sys. 36~(4) (2016) 2133--2170.

\bibitem{paper22}
{\sc A.~Madzvamuse, P.~K.~Maini}, {\em Velocity-induced numerical solutions of
  reaction-diffusion systems on continuously growing domains}., J. Comp. Phys.
  225 (2007) 100--119.

\bibitem{book1}
{\sc J. D. Murray}, {\em Mathematical Biology}, Spatial Models and Biomedical
  Applications., Springer New York, May 2013.

\bibitem{paper17}
{\sc A.~Madzvamuse, A.~H.~W.~Chung, C.~Venkataraman}, {\em Stability analysis
  and simulations of bulk-surface reaction-diffusion systems}., Proc. R. Soc. A
  471~(10) (2015) 891--921.
\newblock \href {http://dx.doi.org/2014.0546} {\path{doi:2014.0546}}.

\bibitem{paper25}
{\sc T.~Erneux, G.~Nicolis}, {\em Propagating waves in discrete bistable
  reaction-diffusion systems}., Physica. D. North-Holland 67 (1993) 237--244.

\bibitem{paper13}
{\sc P.~K.~Maini, M.~R.~Myerscough}, {\em Boundary-driven instability}., J.
  Appl. Math. Lett 10~(1) (1997) 1--4.

\bibitem{paper18}
{A.~Madzvamuse, E.~A.~Gaffney, P.~K.~Maini}, {\em Stability analysis of
  non-autonomous reaction-diffusion systems: the effect of growing domains}.,
  J. Math. Biol. 61 (2010) 133--164.

\bibitem{book5}
{\sc M.~G.~Larson, F.~Bengzon}, {\em The Finite Element Method: Theory,
  Implementation and Application}, Texts in Computational Science and
  Engineering., Springer. Verlag. Berlin Heidelberg, 2013.

\bibitem{book4}
{\sc M.~J.~Baines}, {\em Moving Finite Elements}, Monographs on Numerical
  Analysis., Ox. Sc. Pub., 1994.

\bibitem{paper28}
{\sc P.~Liu, J.~Shi, Y.~Wang, X.~Feng}, {\em Bifurcation analysis of
  reaction-diffusion Schnakenberg model}., Math. Chem. 51 (2013) 2001--2019.

\bibitem{thesis1}
{\sc C.~Venkataraman}, {\em Reaction-diffusion systems on evolving domains with
  applications to the theory of biological pattern formation}, University of
  Sussex, Ph.D. thes. Dep. Math., 2011.

\bibitem{paper26}
{\sc A.~Madzvamuse, P.~K.~Maini, A.~J.~Wathen}, {\em A moving grid finite
  element method for the simulation of pattern generation by Turing models on
  growing domains}., J. Sc. Comp. 24~(2).

\bibitem{book7}
{\sc L.~E.~Keshet}, {\em Mathematical Models in Biology; Classics in Applied
  Mathematics}, Society for Industrial and Applied Mathematics, Philadelphia.
  Pa. USA. (SIAM), 2005.

\bibitem{thesis2}
{\sc A.~Madzvamuse}, {\em A numerical approach to the study of spatial pattern
  formation}, University of Oxford, Ph.D. thes. Math. Inst. Oxofrd, UK, 2000.

\bibitem{book2}
{\sc I.~M.~Smith, D.~V.~Griffiths}, {\em Programming the Finite Element
  Method.~Second Edition}, Finite Element Method and Data Processing, John
  Wiley \& Sons Ltd. New York, 1988.

\bibitem{paper29}
{\sc E.~Campillo-Funollet, C.~Venkataraman, A.~Madzvamuse}, {\em A Bayesian
  approach to parameter identification with an application to Turing systems}.,
  --\href {http://dx.doi.org/arXiv:1605.04718 [q-bio.QM]}
  {\path{doi:arXiv:1605.04718 [q-bio.QM]}}.

\end{thebibliography}

\end{document}